\setlist[itemize]{align=left,leftmargin=*,label=\raisebox{0.25ex}{\scriptsize$\bullet$},noitemsep}
\setlist[enumerate]{align=left,leftmargin=*,noitemsep}
\crefname{page}{p.}{pp.}
\crefname{equation}{equation}{equations}
\crefname{section}{section}{sections}
\crefname{subsection}{section}{sections}
\crefname{subsubsection}{section}{sections}
\crefname{appsec}{appendix}{appendices}
\crefname{supplsec}{supplemental appendix}{supplemental appendices}
\crefname{footnote}{footnote}{footnotes}
\crefname{figure}{figure}{figures}
\crefname{table}{table}{tables}
\crefname{theorem}{theorem}{theorems}
\crefname{proposition}{proposition}{propositions}
\crefname{lemma}{lemma}{lemmata}
\crefname{corollary}{corollary}{corollaries}
\crefname{remark}{remark}{remarks}
\crefname{observation}{observation}{observations}
\crefname{example}{example}{examples}
\crefname{fact}{fact}{facts}
\crefname{definition}{definition}{definitions}
\crefname{assumption}{assumption}{assumptions}
\crefname{exercise}{exercise}{exercises}
\crefname{notation}{notation}{notation}
\crefname{claim}{claim}{claims}
\crefname{conjecture}{conjecture}{conjectures}
\theoremstyle{plain}	% {definition} roman, {plain} italic
\newtheorem{theorem}{Theorem}
\newtheorem{proposition}{Proposition}
\newtheorem{lemma}{Lemma}
\newtheorem{corollary}{Corollary}
\newtheorem{claim}{Claim}
\theoremstyle{definition}	% {definition} roman, {plain} italic
\newtheorem{definition}{Definition}
\newtheorem{example}{Example}
\newtheorem{remark}{Remark}
\newtheoremstyle{named}
	{\topsep}									% ABOVESPACE
	{\topsep}									% BELOWSPACE
	{}												% BODYFONT
	{0pt}											% INDENT (empty value is the same as 0pt)
	{\bfseries}								% HEADFONT
	{}												% HEADPUNCT
	{5pt plus 1pt minus 1pt}	% HEADSPACE
	{\thmnote{#3}}						% CUSTOM-HEAD-SPEC
\theoremstyle{named}
\newtheorem{namedthm}{}
\DeclareMathOperator*{\argmax}{arg\,max}
\DeclareMathOperator*{\cav}{cav}
\DeclareMathOperator*{\supp}{supp}
\DeclareMathOperator*{\conv}{conv}
\newcommand\xqed[1]{%
  \leavevmode\unskip\penalty9999 \hbox{}\nobreak\hfill
  \quad\hbox{#1}}
\newcommand\demo{\xqed{$\diamond$}}
\begin{document}

\title{\sc Covert learning and disclosure\thanks{I am grateful to Piero Gottardi and Giacomo Calzolari for guidance and support, and to Nina Bobkova and Fr\'{e}d\'{e}ric Koessler for detailed comments and advice.
I thank, for useful comments and feedback, Doruk Cetemen, Wouter Dessein, Alfredo Di Tillio, Alessandro Ferrari, Daniel Garrett, Mathijs Janssen, Navin Kartik, Nenad Kos, David K. Levine, Andrea Mattozzi, Marco Ottaviani, Ludvig Sinander, Alessandro Spiganti, Emanuele Tarantino, Adrien Vigier and audiences at Bocconi, EUI, LUISS, Naples `Federico II'/CSEF, Rome `Tor Vergata', TSE, the 2019 Lisbon meetings in game theory and applications, the 2019 European winter meetings and 2020 world congress of the Econometric Society, the 2021 spring meeting of young economists and GAMES 2020. A previous version of this paper was titled `Communication with partially verifiable endogenous information'.}}

\author{Matteo Escud\'{e} \\ {LUISS}}

\date{25 October 2025}
\maketitle

\begin{abstract}
	I study a model of information acquisition and transmission
	in which the sender's ability to misreport her findings is limited.
	The sender learns covertly,
	so a key observation is that in equilibrium
	she must be deterred
	from undetectably \emph{worsening the meaning}
	of the messages she sends.
	This force substantially disciplines
	equilibrium beliefs and behavior:
	the receiver is maximally skeptical
	of what the sender claims
	and learns all discovered information.
	I exploit these equilibrium properties
	to characterize the sense in which the sender benefits
	from her claims being \emph{more verifiable},
	showing that this is akin to increasing her commitment power.
	Finally, I identify sender- and receiver-optimal falsification environments.
\end{abstract}
{\bf Keywords:}
Verifiable disclosure,
information acquisition,
communication,
partial verifiability,
Bayesian persuasion,
limited commitment.

\section{Introduction}
\label{sec:intro}

To inform their choices,
individuals typically rely on the advice of experts
with access to superior information.
In the presence of conflicting interests,
the latter will attempt to influence the former's decisions in at least two ways:
by \emph{acquiring information selectively}
and---as far as is feasible---by \emph{misrepresenting their findings}.
This paper studies the interplay between these two modes of influence,
exploring how the expert's ability to misrepresent her discoveries
influences what information she acquires and, ultimately, transmits to the receiver.

These two persuasion channels are central to several important environments.
CEOs allocate resources within firms using information
gathered and reported by empire-building managers,
who favor investment in their own divisions
in spite of the overall profitability for the firm.
Prospective buyers base their purchasing decisions
on reports by market analysts and advisors;
in the presence of commissions (or other incentive schemes),
such experts will obtain and report information
in ways that maximize the chances of a sale.
Voters evaluate politicians by reading investigative reports and analyses
by partisan journalists,
who are selective in their coverage
and partial in their portrayal of events.

These settings share three key features,
which are also the essential components of the model I study.
First,
there is a substantial conflict of interest
between the `sender' and the `receiver'.
Because of this conflict,
the sender will use her ability
to control information production and transmission
to influence the receiver's decision in her favor,
making the latter---in turn---wary of the advice.

Second, the sender can gather information privately,
without substantial scrutiny by the receiver.
In applications,
this is the case because information production
is often formally delegated
(as in many manager-CEO relationships),
difficult to monitor
(as in the advisor-buyer case),
or secret because of social norms
(for example, journalistic sources are protected by anonymity).

Third, the sender faces constraints on the extent to which
she can misrepresent her findings.
This may be because some of her claims are \emph{verifiable}
by the receiver or by a trusted third party.
She might also have reputational concerns,
fearing that reports excessively different from what she finds
may be ultimately discovered,
damaging her reputation and future credibility.
Additionally, she might be bound by disclosure regulation
or contractual obligations.%
\footnote{For example, reputational concerns are the main (albeit imperfect) reason
behind the credibility of credit-rating agencies (see \citet{Mathis2009} for a discussion).
Disclosure regulation plays a crucial role in financial reporting (see \citet{Leuz2016} for a survey).}

The model I develop incorporates these three ingredients
by marrying the verifiable-disclosure literature's
flexible approach to constrained communication
with Bayesian persuasion's belief-based methods.
My set-up can therefore be viewed from two different angles.
It is a sender-receiver game with (partially) verifiable information,
in which the sender's type distribution is determined
by her own equilibrium covert information acquisition,
rather than fixed and common knowledge.
It is also a model of Bayesian persuasion with limited commitment,
in which the sender can acquire information freely and covertly,
but can only commit to sending a message from
a fixed set for each realization of her private signal.

Combining these two approaches proves fruitful:
I obtain an unraveling result
across all possible misreporting constraints in a large class,
which highlights how these constraints
shape \emph{what} information the sender chooses to acquire.
More specifically, since the sender learns covertly,
she must be deterred in equilibrium from
undetectably \emph{worsening the meaning}
of the messages she is expected to send.
This serves as a key disciplining force,
ensuring that the sender seeks only information
that she can then credibly transmit,
given the misreporting constraints.

This insight, in turn, allows me to address
comparative statics questions,
delineating the connection between
\emph{more verifiability}
of the sender's reporting
and \emph{more commitment power}
in Bayesian persuasion problems
with limited sender commitment.
It also delivers a characterization of optimal verifiability environments:
for the sender, this is a characterization
of the constraints that allow her to obtain
the Bayesian persuasion outcome;
for the receiver, they describe constraints that incentivize
the sender to acquire (and transmit) full information in equilibrium.

\subsection{Example and overview of the results}
\label{sec:intro:example-overview}
Two players, a sender and a receiver,
are initially uninformed about a binary state,
taking values in $\{0,1\}$.
They share a common prior belief $\overline{p} = 1/4$ that the state is $1$.
The sender can covertly and freely acquire information about the state
by privately choosing
a signal---that is: a mean-$1/4$ distribution of posterior beliefs---and
observing its realization.

An important primitive of the model is the \emph{verifiability structure},
which describes what the sender can say to the receiver
following any given outcome of her information acquisition.
Formally, if the sender discovers that the state is $1$ with probability $s \in [0,1]$,
she can send a message from the set $M(s)$ to the receiver.%
\footnote{The fact that the sender's posterior belief
uniquely determines what she can communicate
is an assumption of the model,
discussed in detail in \Cref{sec:model:bdm}.}

For the purposes of this example, suppose that
there are only two possible messages that the sender can use:
call them $m_\emptyset$ and $m_G$.
Assume that $m_\emptyset \in M(s)$ for all $s \in [0,1]$,
so that it can be interpreted as a `silent' message,
meaning that the sender can always
choose to say nothing to the receiver.
Assume instead that $m_G \in M(s)$ iff $s \in [0.5,1]$,
meaning that $m_G$ proves to the receiver
that the state has at least a 50\% chance of being $1$.

Assume that the sender is biased in favor of state $1$
in the sense that she (weakly) benefits from inducing
higher beliefs about the state in the receiver.
Specifically, denote by $v(p)$ the payoff
the sender obtains from inducing belief $p \in [0,1]$ in the receiver
and suppose that $v(p) = 0$ for $p < 0.4$,
$v(p) = 1$ for $p \in [0.4,0.8)$
and $v(p) = 3$ for $p \ge 0.8$.

The payoff $v$ is a reduced-form representation
of the downstream interaction between the sender and the receiver.%
\footnote{In this example, a standard microfoundation
involves the receiver facing a decision problem under uncertainty
with three actions: $a_L$, $a_M$ and $a_H$.
Her payoff is such that she optimally chooses
$a_L$ at beliefs strictly below $0.4$,
$a_M$ at beliefs in $[0.4,0.8)$
and $a_H$ at beliefs at or above $0.8$.
The sender has state-independent preferences over the receiver's actions,
obtaining payoffs of $0$, $1$ and $3$ from actions $a_L$, $a_M$ and $a_H$, respectively.}
The latter's behavior is not modeled explicitly:
the receiver is an active player only insofar as we will require
her posterior belief after every message to be
always consistent with the verifiability structure
(so, after observing $m_G$,
she must believe the state
has at least a $50\%$ chance of being equal to $1$)
and formed using Bayes' rule on the equilibrium path.

\paragraph{Equilibrium unraveling and skepticism.}
Suppose first that the sender chooses to acquire information about the state
through a signal `splitting' her prior into posteriors $0$ and $0.8$.
Note that this would be the equilibrium signal
if the sender's information acquisition
were public---as in \citet{Kamenica2011}---attaining
an expected payoff of $(\cav v) (1/4)$,
as depicted in \Cref{fig:3-act-example:bp}.

\begin{figure}[ht]
	\centering
	\begin{subfigure}[h]{0.45\textwidth}
	\centering
		\begin{tikzpicture}[scale=0.5]

  % === Axes ===
  \draw (0,0) -- (10,0);       % horizontal axis
  \draw[->] (0,0) -- (0,5.5);   % vertical axis

  % === Tick marks & labels ===
  \draw (0,0.15)  -- (0,-0.15)   node[below] {\small $0$};
  \draw (10,0.15) -- (10,-0.15)  node[below] {\small $1$};
  \draw (4,0.15)  -- (4,-0.15)   node[below] {\small $0.4$};
  \draw (2.5,0.15)-- (2.5,-0.15) node[below] {\small $\overline{p}$};
  \draw (8,0.15)  -- (8,-0.15)   node[below] {\small $0.8$};
  \draw (0.15,10/6) -- (-0.15,10/6) node[left] {\small $1$};
  \draw (0.15,10/2) -- (-0.15,10/2) node[left] {\small $3$};

  % === Value function (light gray) ===
  \draw[very thick, gray] (0,0) -- (4,0);
  \draw[very thick, gray] (4,10/6) -- (8,10/6);
  \draw[very thick, gray] (8,10/2) -- (10,10/2);
  % Markers (light gray)
  \draw[fill, gray] (4,10/6) circle (0.075);
  \draw[fill, gray] (8,10/2) circle (0.075);
  \draw[draw=gray, fill=white] (4,0)     circle (0.075);
  \draw[draw=gray, fill=white] (8,10/6)  circle (0.075);

  % === Cav (dotted) ===
  \draw[ultra thick, dotted] (0,0) -- (8,10/2) -- (10,10/2);
  \draw[draw=black, fill=black] (2.5,1.5625) circle (0.15); % Equilibrium payoff

  % === Messages (hidden with phantom) ===
  \phantom{
    \draw (0,-1.5) -- (10,-1.5) node[right] {\small $\bar{m}$};
    \draw (5,-2.2) -- (10,-2.2) node[right] {\small $m_G$};
    \draw (0,-1.4) -- (0,-1.6);
    \draw (10,-1.4) -- (10,-1.6);
    \draw (5,-2.1) -- (5,-2.3);
    \draw (10,-2.1) -- (10,-2.3);
  }

\end{tikzpicture}
		\caption{\footnotesize Bayesian persuasion.}
		\label{fig:3-act-example:bp}
	\end{subfigure}
	\quad
	\begin{subfigure}[h]{0.45\textwidth}
	\centering
		\begin{tikzpicture}[scale=0.5]

  % === Axes ===
  \draw (0,0) -- (10,0);        % horizontal axis
  \draw[->] (0,0) -- (0,5.5);   % vertical axis

  % === Tick marks & labels ===
  \draw (0,0.15) -- (0,-0.15)   node[below] {\small $0$};
  \draw (10,0.15) -- (10,-0.15) node[below] {\small $1$};
  \draw (2.5,0.15) -- (2.5,-0.15) node[below] {\small $\overline{p}$};
  \draw (5,0.15) -- (5,-0.15)   node[below] {\small $0.5$};
  \draw (0.15,10/6) -- (-0.15,10/6) node[left] {\small $1$};
  \draw (0.15,10/2) -- (-0.15,10/2) node[left] {\small $3$};

  % === Value function (light gray) ===
  \draw[very thick, gray] (0,0) -- (4,0);
  \draw[very thick, gray] (4,10/6) -- (8,10/6);
  \draw[very thick, gray] (8,10/2) -- (10,10/2);
  % Markers (light gray)
  \draw[fill, gray] (4,10/6) circle (0.075);
  \draw[fill, gray] (8,10/2) circle (0.075);
  \draw[draw=gray, fill=white] (8,10/6) circle (0.075);
  \draw[draw=gray, fill=white] (4,0)     circle (0.075);

  % === Skeptical value (black) ===
  \draw[thick] (0,0) -- (5,0);
  \draw[thick] (5,10/6) -- (10,10/6);
  % Markers (black)
  \draw[draw=black, fill=white] (5,0)     circle (0.075);
  \draw[fill] (5,10/6) circle (0.075);

  % === Cav (dotted) ===
  \draw[ultra thick, dotted] (0,0) -- (5,10/6) -- (10,10/6);
  \draw[draw=black, fill=black] (2.5,10/12) circle (0.15); % Equilibrium payoff

  % === Messages ===
  \draw (0,-1.5) -- (10,-1.5) node[right] {\small $m_\emptyset$};
  \draw (5,-2.2) -- (10,-2.2) node[right] {\small $m_G$};
  % Message ticks
  \draw (0,-1.4) -- (0,-1.6);
  \draw (10,-1.4) -- (10,-1.6);
  \draw (5,-2.1) -- (5,-2.3);
  \draw (10,-2.1) -- (10,-2.3);

\end{tikzpicture}
		\caption{\footnotesize Covert learning and disclosure.}
		\label{fig:3-act-example:cld}
	\end{subfigure}
	\caption
		{\small On the left: the sender payoff function $v$
		(solid gray) and
		its concave envelope
		(thick, dotted)
		as a function of the receiver's belief.
		On the right:
		verifiability structure $M$
		(below the horizontal axis),
		the sender's skepticism-adjusted value function
		(solid black)---overlaid over $v$
		(solid gray)---and its concave envelope
		(thick, dotted).
		The large black dot corresponds to
		the sender's equilibrium expected payoff in each case.}
	\label{fig:3-act-example}
\end{figure}

This cannot be an equilibrium signal in my model.
To see why, suppose indeed the sender chose such a signal in equilibrium,
sending message $m_\emptyset$ when her type is $0$
(i.e., when she privately discovers that the state is $0$ for sure)
and message $m_G$ when her type is $0.8$
(i.e., when the signal realization is $s = 0.8$).
In this candidate equilibrium,
the sender would obtain a payoff of $0$ following message $m_\emptyset$,
and a payoff of $3$ following message $m_G$.

However, the sender could profitably deviate
to a signal splitting the prior
between $0$ and $0.8 - \delta$ (for $\delta < 0.3$);
in the former case sending message $m_\emptyset$
and in the latter sending message $m_G$.
This deviation is not detectable by the receiver
(since she does not observe the sender's choice of signal)
and it increases the probability of sending $m_G$
while decreasing the probability of sending $m_\emptyset$,
thus strictly increasing the sender's expected payoff.
This kind of deviation---which \emph{worsens the meaning}
of message $m_G$---plays a significant role
in shaping equilibrium behavior.

Indeed, only a signal splitting the prior
into posteriors $0$ and $0.5$
is not prone to such a deviation.
These are the lowest sender types consistent
with messages $m_\emptyset$ and $m_G$, respectively,
and therefore these messages' meaning
cannot be worsened any further.
This signal is part of what I call
an `unraveling' equilibrium,
in which sender type $0$ sends message $m_\emptyset$,
and sender type $0.5$ sends $m_G$.
Anticipating this behavior by the sender,
the receiver is maximally skeptical,
assigning to each message
the lowest sender belief
that is consistent with it:
belief $0$ following message $m_\emptyset$ and
belief $0.5$ following message $m_G$.
These correspond precisely to the sender's beliefs,
meaning that all acquired information is revealed to the receiver.
This equilibrium is depicted in \Cref{fig:3-act-example:cld}.

The three features of the unraveling equilibrium just described
(on-path sender types being `lowest-consistent'
with the messages they send,
the receiver's maximal skepticism and
full revelation of acquired information)
generalize beyond the specific payoff function $v$
and verifiability structure $M$ of the example.
The first result of the paper
(\Cref{thm:unraveling})
shows that, when $v$ is non-decreasing,
all equilibria involving some variation in the receiver's belief
which is payoff-relevant for the sender
are `unraveling' in the sense that they satisfy these three properties.

A direct consequence is that
the sender's equilibrium expected payoff
is the same across all unraveling equilibria and
can be obtained as the concave envelope
of a skepticism-adjusted value of posteriors.
This function is depicted in \Cref{fig:3-act-example:cld}
in the context of the example,
and this observation is generalized in \Cref{cor:value}.

\paragraph{Uniqueness of unraveling.}
In the example, the unraveling equilibrium just presented
is, in fact, the only equilibrium.
\Cref{thm:pnbp} provides a mild sufficient condition
on the primitives of the environment
ensuring that all equilibria are unraveling
and that therefore the sender's equilibrium expected payoff is unique.

The condition is that the sender must be able to
`prove news better than the prior' (PNBP).
This property holds if there exists a message that,
even when interpreted most skeptically by the receiver,
leads to a payoff for the sender strictly above what she would obtain at the prior.
In the example, PNBP holds because message $m_G$---even if followed
by a receiver belief of $0.5$---leads to a payoff of $1$ for the sender,
which is strictly higher than 
her payoff of $0$ at the prior.

At a high level, PNBP ensures that all equilibria are unraveling
because a message guaranteeing the sender
a payoff strictly above the prior
always opens the door to a deviation 
that worsens the meaning of some message,
unless the receiver is maximally skeptical on-path.

\Cref{thm:pnbp} also shows that, in contrast,
if the sender cannot prove news better than the prior,
there is always a sender-preferred equilibrium
in which she acquires and transmits no information.
This demonstrates how her bias
makes information control worthless,
absent some minimal amount of verifiability,
as captured by PNBP.

\paragraph{More verifiability and more commitment power.}
A natural question to ask is whether (and in which sense)
the sender benefits from being able to `prove more'.
For concreteness, suppose that in the example
the sender has access to a new message, call it $m_E$,
which she can only send if she discovers the \emph{excellent} news that
the state has at least a 90\% chance of being high
(i.e., $m_E \in M(s)$ iff $s \in [0.9,1]$).

It can be easily verified that in equilibrium,
the sender will now choose to split the prior between $0$ and $0.9$,
sending messages $m_\emptyset$ and $m_E$, respectively.
This leads to an increase in the sender's equilibrium expected payoff
compared to the case in which only $m_\emptyset$ and $m_G$ were available,
as depicted in \Cref{fig:3-act-example-more-ver}.

\begin{figure}[ht]
	\centering
	\begin{tikzpicture}[scale=0.5]

  % === Axes ===
  \draw (0,0) -- (10,0);       % horizontal axis
  \draw[->] (0,0) -- (0,5.5);   % vertical axis

  % === Tick marks & labels ===
  \draw (0,0.15)  -- (0,-0.15)   node[below] {\small $0$};
  \draw (10,0.15) -- (10,-0.15)  node[below] {\small $1$};
  \draw (5,0.15)  -- (5,-0.15)   node[below] {\small $0.5$};
  \draw (2.5,0.15)-- (2.5,-0.15) node[below] {\small $\overline{p}$};
  \draw (9,0.15)  -- (9,-0.15)   node[below] {\small $0.9$};
  \draw (0.15,10/6) -- (-0.15,10/6) node[left] {\small $1$};
  \draw (0.15,10/2) -- (-0.15,10/2) node[left] {\small $3$};

  % === Value function (light gray) ===
  \draw[very thick, gray] (0,0) -- (4,0);
  \draw[very thick, gray] (4,10/6) -- (8,10/6);
  \draw[very thick, gray] (8,10/2) -- (10,10/2);
  % Markers (light gray)
  \draw[fill, gray] (4,10/6) circle (0.075);
  \draw[fill, gray] (8,10/2) circle (0.075);
  \draw[draw=gray, fill=white] (4,0)     circle (0.075);
  \draw[draw=gray, fill=white] (8,10/6)  circle (0.075);

  % === Skeptical value (black) ===
  \draw[thick] (0,0) -- (5,0);
  \draw[thick] (5,10/6) -- (9,10/6);
  \draw[thick] (9,10/2) -- (10,10/2);
  % Markers (black)
  \draw[draw=black, fill=white] (5,0)     circle (0.075);
  \draw[fill] (5,10/6) circle (0.075);
  \draw[draw=black, fill=white] (9,10/6)  circle (0.075);
  \draw[fill] (9,10/2) circle (0.075);

  % === Cav (dotted) ===
  \draw[ultra thick, dotted] (0,0) -- (9,10/2) -- (10,10/2);
  \draw[draw=black, fill=black] (2.5,1.3889) circle (0.15); % Equilibrium payoff

  % === Messages ===
  \draw (0,-1.5) -- (10,-1.5) node[right] {\small $m_\emptyset$};
  \draw (5,-2.2) -- (10,-2.2) node[right] {\small $m_G$};
  \draw (9,-2.9) -- (10,-2.9) node[right] {\small $m_E$};
  % Message ticks
  \draw (0,-1.4) -- (0,-1.6);
  \draw (10,-1.4) -- (10,-1.6);
  \draw (5,-2.1) -- (5,-2.3);
  \draw (10,-2.1) -- (10,-2.3);
  \draw (9,-2.8) -- (9,-3.0);
  \draw (10,-2.8) -- (10,-3.0);

\end{tikzpicture}
	\caption{\small Adding message $m_E$ makes the sender better off.}
	\label{fig:3-act-example-more-ver}
\end{figure}

What is the appropriate notion of `more verifiability'
which ensures that the sender's payoff
improves in equilibrium more generally?
\Cref{thm:cs} provides a pre-order
on the set of verifiability structures that characterizes
those shifts that are beneficial for the sender
when she can prove news better than the prior.
The necessary and sufficient condition is easy to describe:
the set of sender types that are able to separate
from all lower types with a single message,
regardless of the receiver's belief, must grow in the sense of set inclusion.
This guarantees that any equilibrium distribution of receiver beliefs
inducible in the lower-ranked verifiability structure
is also inducible in the higher-ranked one.

If the sender cannot prove news better than the prior---meaning that
unraveling equilibria need not exist---I show instead
that shifts towards
more verifiability may in fact hurt the sender ex ante
(\Cref{sec:cs:more-ver-hurt}).
This is caused by the lack of observability
of the sender's information acquisition strategy.
In a more constrained environment,
the equilibrium meaning of some messages
may worsen with more verifiability:
if the sender is known to be able to prove good news,
any claim of ignorance she makes
will no longer be taken at face value by the receiver.

The comparative statics result is useful for two reasons.
First, the pre-order makes it possible to compare
a rich variety of common verifiability environments.
For example, according to the pre-order,
cheap talk is minimal,
while full-verifiability structures are maximal.
Interval-partition verifiability structures
(\Cref{ex:partition})---capturing settings
with `grades' or `ratings'
that certifiably pool together sets of types---are ranked
if and only if one is a refinement of the other
(\Cref{ex:partition-refine} and \Cref{re:cs-examples}).
Adding a new message as in the example---thereby allowing the sender
to credibly communicate
her acquired information more finely---always leads to
a higher-ranked structure
(\Cref{ex:add-message} and \Cref{re:cs-examples}).

Second, the result can be viewed
as providing a tool for comparing Bayesian persuasion
environments with different forms of sender commitment.
In this context, it characterizes
the appropriate notion of `increasing commitment power'
which is valuable for the sender.

\paragraph{Optimal verifiability.}
The comparative statics result also leads directly to a characterization
of sender-optimal verifiability structures (\Cref{prop:sender-optimal}),
which allow the sender to attain the full-commitment value in equilibrium.
The result therefore highlights how
the full commitment assumption in the canonical Bayesian persuasion setting
can be relaxed, without affecting the sender's equilibrium payoff.
The necessary and sufficient condition states
that all sender types must be able to separate from all lower ones
by means of a single message,
no matter what the receiver believes.
Given the monotone bias of the sender,
this guarantees that separation is possible for all types in equilibrium,
allowing the sender to replicate the Bayesian persuasion outcome.

Finally, I characterize structures
that lead to full information provision in equilibrium
(\Cref{prop:full-info}).
These are receiver-optimal,
provided that the receiver
values information about the state.
They take a simple form:
only the very best news
(that is: claiming that the state is equal to $1$ with certainty)
is considered credible, while all inconclusive news is fully unverifiable.
Every structure of this kind incentivizes
perfect information acquisition
by the sender,
which she then fully reveals in equilibrium.

\subsection{Literature}
\label{sec:intro:lit}

There is an extensive literature studying
the role of verifiability (or lack thereof)
in sender-receiver games.
A number of papers have pointed out that the classical unraveling result
(\citet{Viscusi1978}, \citet{Grossman1980}, \citet{Grossman1981} and \citet{Milgrom1981})
breaks down when the sender's private information is not fully verifiable,
in the sense that some sender types do not have access to messages that
allow them to sustain separation in equilibrium.%
\footnote{This approach traces back to
\citet{Dye1985} and \citet{Jung1988} in sender-receiver games
and to \citet{Green1986} in a mechanism design setting.
Important contributions include
\citet{Okuno-Fujiwara1990},
\citet{Lipman1995}
and, more recently,
\citet{Mathis2008}, \citet{Hagenbach2014} and \citet{Rappoport2024}.
Complete lack of verifiability is studied in the literature on cheap talk,
initiated by \citet{Crawford1982}.}
In this literature, the sender's type distribution is fixed and common knowledge;
in my model, the players start equally uninformed
and the sender can privately acquire information
before communicating with the receiver.
This paper therefore explores the role
of (partially) verifiable information
in a context in which what the sender privately knows
is a covert strategic decision on her part.

In doing so, this paper contributes to a growing literature
which augments sender-receiver games
by endogenizing the sender's private information.
This includes papers studying sender learning in cheap talk models,%
\footnote{See, for example, \citet{Pei2015} and \citet{Deimen2019}.
Earlier, \citet{Austen-Smith1994} explored this question
in a setting in which the sender
either learns the value of the state perfectly, or not at all;
in his setting, information acquisition is covert,
but the sender can prove that she is informed.}
settings with full verifiability%
\footnote{An early example is \citet{Matthews1985},
in which the sender faces a binary choice
between a fully informative and verifiable signal
and a fully uninformative and unverifiable signal.
More recently, \citet{Gentzkow2017} show that
under full verifiability
the sender perfectly reveals all acquired information.}
and gradual evidence acquisition.%
\footnote{For example,
\citet{Argenziano2016}, \citet{Felgenhauer2017},
\citet{Janssen2018}, \citet{Herresthal2022}.}
This paper studies the case in which
the sender's covertly acquired private information
is only partially verifiable.
For this reason, it is closest to a pair of papers
studying covert information acquisition
in \citet{Dye1985}-like environments:
\citet{Ben-Porath2018} and, in particular, \citet{DeMarzo2019}.%
\footnote{Information acquisition with Dye evidence
is also studied in \citet{Kartik2017} and \citet{Shishkin2024}.}

As in these papers, my model studies
how the lack of observability of the sender's information acquisition
determines what she chooses to learn and communicate with the receiver.
There are two substantial differences in our approaches.

The first concerns the class of verifiability structures studied.
In the Dye setting these papers consider,
the only unverifiable message is `non-disclosure'.
Its meaning must therefore be determined in equilibrium
and will be shaped by the sender's ability to conceal negative evidence
and to covertly change the way she decides to acquire information.
In my model, the meaning of several messages---potentially
of all of them---must be
jointly determined in equilibrium.%
\footnote{More formally, the receiver's belief following any $m$
for which $M^{-1}(m)$ is not a singleton is determined in equilibrium.
With Dye evidence, only the `non-disclosure' message, call it $m_\emptyset$,
is such that $M^{-1}(m_\emptyset)$ contains more than one element.}
My approach to partial verifiability allows
me to capture Dye-style evidence (see \Cref{ex:dye})
but also to speak directly to a variety of canonical settings
in the communication literature
which Dye evidence cannot capture.
For example, my formalization of verifiability
encompasses settings ranging from
mandatory disclosure (and, more
broadly, full verifiability) to
cheap talk (see \Cref{ex:mandatory-disclosure,ex:full-verifiability,ex:cheap-talk,ex:partition}).

The second key difference is that in these papers
the sender's payoff is affine in the receiver's belief.
It follows that the `selective information acquisition' persuasion motive
present in my model
(and central to the Bayesian persuasion literature)
is absent from these papers.

One important dimension in which my setting is less general
than theirs is that they study problems
in which the sender is constrained
in the signals she can choose.
In my model the sender is fully unrestricted
in her information acquisition strategy.
I discuss this in further detail in the context of \Cref{ex:dye} on \cpageref{ex:dye-cont}.

This paper also contributes to the literature exploring the consequences of
relaxing the commitment assumption in \citet{Kamenica2011}.
Some papers allow the sender to alter negative results with some positive probability
(for example, \citet{Lipnowski2022}, \citet{Frechette2022} and \citet{Min2021})
while others study models of Bayesian persuasion with lying costs.
In this latter category,
\citet{Nguyen2021} analyze a setting in which the sender's
information acquisition strategy is observable by the receiver
while in \citet{Guo2021}'s set-up, as in mine, it is not.
Both provide conditions on the costs
such that the sender-preferred equilibrium exhibits full information revelation.

In contrast to these papers, in my model
the sender faces a constraint on what she can or cannot communicate
depending on what she discovers.
This distinct formalization of partial commitment
allows me to bridge the gap
between the Bayesian persuasion setting
and models of verifiable disclosure.
This, in turn, leads to a characterization of when
`more verifiability' of the sender's acquired information 
is akin to her having `more commitment power'.

Finally, this paper is related
to the three applied literatures mentioned in the introduction:
information control within organizations,%
\footnote{\citet{GibbonsMatouschekRoberts2013} survey the literature on conflicting interests within organizations and their effect on the production and transmission of information.}
acquisition and disclosure of product information,%
\footnote{\citet{Dranove2010} survey the product-quality disclosure literature. In the context of acquisition and disclosure of financial information, \citet{Lin1998} and \citet{Michaely1999} provide empirical evidence of how financial analysts are biased in their recommendations when their employers have an underwriting relation with the firm under review. The financial reporting literature is surveyed in \citet{Leuz2016}.}
and media bias.%
\footnote{A large and active empirical literature documents media bias and its consequences for news production, reporting and electoral behavior. For example, \citet{Groseclose2005} highlight the \emph{presence} of media bias and \citet{DellaVigna2007} emphasize its influential role. Evidence of partial and selective coverage of events is also widely documented, for example in \citet{Puglisi2011} and \citet{Larcinese2011}. See \citet{Gentzkow2015} for a survey.}

\paragraph{Roadmap.} I introduce the model in the next section.
Unpersuasive and unraveling equilibria are studied in \Cref{sec:equilibria},
along with a discussion of the value of persuasion
in the context of my model of disclosure with covert learning.
\Cref{sec:cs} contains the comparative statics result,
characterizing changes in verifiability that are beneficial for the sender,
and the characterization of sender-optimal verifiability structures.
Receiver-optimal structures are characterized in \Cref{sec:receiver-optimal}.

\section{Model}
\label{sec:model}

There is an unknown binary state of the world $\omega$, taking values in $\{0,1\}$.
A sender and a receiver initially assign prior probability $\overline{p} \in [0,1]$ to $\omega = 1$.
The sender can acquire information about the state by choosing a signal and observing its realization.
Information acquisition is covert in the sense that
the receiver observes neither the sender's chosen signal, nor its realization.
Formally, a feasible signal is a pair $(\mathcal{S},\pi)$
where $\mathcal{S}$ is a finite set of possible realizations
and $\pi \in \Delta (\{0,1\} \times \mathcal{S})$
is a joint distribution
such that the marginal distribution of the state
coincides with the prior $\overline{p}$:
$\sum_{s \in \mathcal{S}} \pi(1,s) = \overline{p}$.%
\footnote{We will soon restrict $\mathcal{S}$ to be a subset of $[0,1]$, see \Cref{re:unbias-belief-mes}.}

After acquiring information, the sender communicates with the receiver.
Her acquired private information determines what she is able to say:
if the sender's chosen signal is $(\mathcal{S},\pi)$ and its realization is $s \in \mathcal{S}$,
she must send a message from the non-empty finite set $M(\mathcal{S},\pi;s)$ to the receiver.
I refer to the mapping $M$, assigning sets $(M(\mathcal{S},\pi;s))_{s \in \mathcal{S}}$ to every signal $(\mathcal{S},\pi)$,
as the \emph{verifiability structure}, which is common knowledge among the two players.
$M$ describes how the sender's acquired private information constrains how she can communicate with the receiver.
In doing so, it also captures what the sender can prove about what she learned to the receiver.

Having observed the sender's message, the receiver updates her belief about the state.
The sender obtains a payoff $v(p)$ if the receiver holds a posterior belief $p \in [0,1]$.
The function $v: [0,1] \to \mathbb{R}$ is assumed to be non-decreasing,
capturing that the sender prefers inducing higher beliefs in the receiver,
and upper semi-continuous.
The sender maximizes the expected value of $v$.

The preferences of the receiver are not modeled explicitly as they play no role in the analysis.
The receiver can be interpreted either as a literal player,
who takes a payoff-relevant action following the sender's message,
or as a `market' in which the sender obtains return $v(p)$ at belief $p$.

\subsection{Beliefs determine messages}
\label{sec:model:bdm}

Operationalizing the notion of verifiability structure requires
articulating how the sender's private information shapes what she can say
and, vice versa, how the receiver's beliefs are restricted
following a given message from the sender.
I do so by restricting attention to verifiability structures
in which \emph{beliefs determine messages}, defined as follows.

\begin{definition}[Beliefs determine messages]
	\label{def:bdm}
	$M$ is such that \emph{beliefs determine messages}
	iff for every pair of signals
	$(\mathcal{S}, \pi)$ and $(\mathcal{S}', \pi')$,
	and every $s \in \mathcal{S}$, $s' \in \mathcal{S}'$,
	if $\pi(\cdot|s) = \pi'(\cdot|s')$
	then $M(\mathcal{S},\pi;s) = M(\mathcal{S}',\pi';s')$.
\end{definition}

If beliefs determine messages,
it is only the information \emph{about the state}
contained in the signal realization---that is:
the sender's own private posterior belief---that determines what messages the sender can use.

The underlying assumption is that the sender cannot \emph{directly} prove anything
about how she chose to acquire information
(that is: about the signal $(\mathcal{S},\pi)$),
but only about what she learned regarding the state.
Observe that proving information about the state
\emph{indirectly} proves something about the sender's choice of signal.
That is, if a certain message $m$ proves to the receiver that
the sender's belief lies in some set $S_m \subseteq [0,1]$,
the sender is also proving that she acquired information
in a way that assigns positive probability
to her obtaining a belief in $S_m$.
The beliefs determine messages assumption
imposes that such a message $m$ proves
nothing more than this to the receiver.

This assumption formalizes a feature of many applications:
the details of the information-gathering process are typically difficult to verify
while information about the state is---at least in part---verifiable.
The reason why it is hard to provide credible information
about the learning process may be technological:
for example, the information-gathering procedure may be long and difficult to monitor.
It may also be due to social norms or legal constraints:
for example, journalistic sources are typically protected by anonymity.

In contrast, information about the state can more easily be verified,
for example by a trusted third party
(such as an auditor, a fact-checker or an accreditation body) 
who certifies what the sender reports about the payoff-relevant state.
Importantly, such certification typically does not coincide
with full disclosure of what the sender discovers:
it may involve coarse grades or thresholds,
and perhaps leave the sender the possibility of redacting a report if it reveals bad news.
\Cref{sec:model:examples} below provides more detailed examples of such verifiability structures.

Importantly, abstracting away
from this secondary layer of communication has a methodological payoff,
as it allows the problem to be studied using a belief-based approach.

\begin{remark}
	\label{re:unbias-belief-mes}
	Given the restriction to verifiability structures
	in which beliefs determine messages,
	it follows that:
	\begin{enumerate}[label=(\roman*)]
		\item \label{re:unbias-belief-mes:unbias}
		It is without further loss to restrict the sender to choosing signals
		taking values in $[0,1]$ and such that each realization
		coincides with the sender's own private posterior belief about the state
		(i.e., such that $\pi(1|s) = s$ when $s \in [0,1]$ realizes).
		I shall frequently refer to $s$ as the sender's \emph{type}.
		\item \label{re:unbias-belief-mes:belief-mes}
		We can lighten notation further by omitting
		the dependence of $M$ on the signal $(\mathcal{S},\pi)$.
		Henceforth $M$ will simply denote a mapping assigning to each $s \in [0,1]$
		a non-empty finite set of messages $M(s)$.
	\end{enumerate}
\end{remark}

Before turning to examples,
I state a closure assumption
on the verifiability structures considered,
which is maintained throughout the analysis.
For a given verifiability structure $M$,
let $\mathcal{M} \equiv \cup_{s \in [0,1]} M(s)$
denote the set of all possible messages.
For any given message $m \in \mathcal{M}$,
let
\begin{equation*}
	M^{-1}(m) \equiv \{ s \in [0,1] : m \in M(s) \}
\end{equation*}
denote the set of sender types
that can send message $m$.
We assume that this set always contains its infimum:
\begin{equation}
	\label{eq:inf-M}
	\inf M^{-1}(m) \in M^{-1}(m) \quad \text{for every} \quad m \in \mathcal{M}.
\end{equation}
Therefore, for any given message,
there is a well-defined lowest type
that could have sent it.
This condition ensures that
the receiver's `maximally skeptical' belief---which
will be used throughout the analysis and is based
on this lowest type---is always well-defined.

\subsection{Examples of verifiability structures}
\label{sec:model:examples}

The following examples illustrate some notable cases
of verifiability structures in which beliefs determine messages.

\begin{example}[Mandatory disclosure]
	\label{ex:mandatory-disclosure}
	Suppose the sender is forced to reveal
	whatever she has discovered about the state.
	This could be due to a deliberate (ex ante) choice by the sender
	(e.g., by hiring a trusted third-party
	to certify that no acquired information is withheld)
	or to stringent disclosure regulation.
	Formally, we say $M$ is the \emph{mandatory disclosure}
	verifiability structure if $M(s) = \{s\}$ for each $s \in [0,1]$,
	capturing that the sender
	is forced to truthfully report her belief about the state.
	It is immediate (for more details, see the discussion in \Cref{sec:cs})
	that this verifiability structure
	allows the sender to attain \citet{Kamenica2011}'s `full-commitment' value in equilibrium.
	\demo
\end{example}

\begin{example}[Full verifiability]
	\label{ex:full-verifiability}
	Mandatory disclosure as described in the previous example
	is a special case of a verifiability structure in which
	the sender is able to prove precisely what she discovered to the receiver.
	More generally, the sender might have the \emph{option} of doing so,
	but may also be able to misreport or conceal her findings.
	For example, she might hire a trusted third-party certifier,
	but retain the option of preventing him from making any disclosure.%
	\footnote{We can describe this particular situation with the verifiability structure
	$M(s) = \{s, m_\emptyset\}$ for each $s \in [0,1]$.
	Therefore, upon learning $s$,
	the sender can either allow
	the third-party certifier to make a (truthful) disclosure of $s$
	or can prevent him from reporting anything,
	sending the null message $m_\emptyset$.}

	Formally, we say verifiability structure $M$ exhibits \emph{full verifiability} 
	if for each $s \in [0,1]$, $s \in M(s)$ and $s \notin M(s')$ for $s' \ne s$.
	That is: each type of sender has the possibility of proving her identity to the receiver.
	This setting includes the mandatory disclosure setting of \Cref{ex:mandatory-disclosure}.
	It also describes a key property of the canonical verifiability assumptions
	in \citet{Grossman1980}, \citet{Grossman1981} and \citet{Milgrom1981}.
	The outcome in this setting is discussed in \Cref{re:full-ver-opt}.
	\demo
\end{example}

The following environments exhibit instead \emph{partial} verifiability
in the sense that $M(s) \subseteq M(s')$ for some $s, s' \in [0,1]$,
meaning that type $s'$ can always imitate type $s$.

\begin{example}[Cheap talk]
	\label{ex:cheap-talk}
	At the other end of the spectrum from full verifiability,
	we have cheap talk situations,
	where the sender can claim anything regardless of what she discovered.
	This could be a consequence of the information acquisition technology:
	for example, it could be that learning
	only produces information which is `soft'
	or easy-to-falsify.
	Formally, we have that $M(s) = \bar{M}$ for each $s \in [0,1]$,
	for some given---perhaps very large---finite set $\bar{M}$.
	\Cref{sec:equilibria:pnbp} discusses outcomes
	in environments of this kind.
	\demo
\end{example}

\begin{example}[Partition]
	\label{ex:partition}
	Typically, what the sender can reveal is influenced by what she discovers
	(unlike in cheap talk) but she may be unable to unequivocally prove her findings.
	For example, third-party certification may only come in coarse grades
	(proving to the receiver that what the sender believes about the state
	lies in a certain set).

	To formalize this notion, let $\mathcal{P}$ denote a partition of $[0,1]$;
	a partitional verifiability structure $M_\mathcal{P}$
	assigns to each $P \in \mathcal{P}$ a message $m_P$.
	That is: if type $s$ lies in the partition element $P \subseteq [0,1]$,
	she must reveal the partition element she belongs to by sending message $m_P$.
	In applications, partitional structures are typically of the \emph{interval} form:
	each $P \in \mathcal{P}$ is an interval.
	In this case, the verifiability structure can be interpreted as a scoring protocol,
	with a higher score corresponding to a higher (non-overlapping) interval of types.
	\Cref{re:cs-examples} provides a general result
	for comparing such structures from the perspective of the sender.
	\demo 
\end{example}

\begin{example}[Dye verifiability]
	\label{ex:dye}
	Suppose that the sender can always make a verifiable claim
	about what she discovers, unless she learns nothing,
	in which case she has no evidence to present.
	Suppose also that, regardless of what she learns, she always
	has the option of suppressing her verifiable information
	by claiming that she learned nothing instead.
	This simple verifiability structure was introduced by
	\citet{Dye1985} and \citet{Jung1988} to show that
	unraveling can fail when verifiability is partial.
	In my model, this situation can be formalized by setting
	$M(s) = \{s, m_\emptyset \}$ for each $s \neq \overline{p}$,
	while $M(\overline{p}) = \{ m_\emptyset \}$.
	Message $m_\emptyset$ can therefore be interpreted
	as the `non-disclosure' message.%
	\footnote{To illustrate how unraveling fails
	with this verifiability structure
	in the case of a fixed (exogenous) signal,
	consider a strictly increasing $v$
	and a signal such that
	with probability $\alpha \in (0,1)$
	the sender learns the value of the state and
	with complementary probability $1 - \alpha$ learns nothing.
	This signal gives rise to three possible sender types:
	$0$, $\overline{p}$ and $1$.
	In equilibrium, sender type $0$
	will pool with type $\overline{p}$
	by sending message $m_\emptyset$,
	while type $1$ will separate by sending message $1$.
	This signal cannot, in fact, be an equilibrium in my model,
	as it is subject to a deviation
	which worsens the meaning of $m_\emptyset$.
	See \Cref{thm:pnbp}
	and the discussion in the continuation of
	\Cref{ex:dye} on \cpageref{ex:dye-cont} for more details.}

	This is a special case of a more general class
	of verifiability structures in which there is a set $S_v \subseteq [0,1]$
	of verifiable sender types such that $M(s) = \{s, m_\emptyset  \}$ for $s \in S_v$
	while $M(s) = \{ m_\emptyset \}$ for $s \notin S_v$.
	\demo
\end{example}

\begin{example}[Certifiable thresholds]
	\label{ex:thresholds}
	The sender might be able to make certifiable claims of the form
	`I discovered the state is high at least with probability $s$'.
	This report is credible provided no type $s' < s$ can also make such a claim.
	A special case of this is when the sender privately learns via a binary test
	and she can prove she obtained good news
	(e.g., she can prove her posterior is above the prior),
	but can always claim to have obtained no information.

	Verifiability structures of this kind
	can be described by \emph{certifiable thresholds}.
	Formally, they are defined by a sequence of sender types, $0 = s_0 < s_1 < \dots < s_N \le 1$,
	and a set of messages $\{m_0, \dots, m_N\}$
	such that $m_i \in M(s)$ iff $s \ge s_i$.
	In this environment, the sender can prove to the receiver that a threshold $s_i$
	(corresponding to a probability $s_i$ of $\omega = 1$) has been reached.
	She can always claim to have attained a lower threshold, but not a higher one.
	One such example was solved in detail in \Cref{sec:intro:example-overview}.
	\demo
\end{example}

\subsection{Strategies and solution concept}

In light of \Cref{re:unbias-belief-mes},
rather than writing $(\mathcal{S},\pi)$ for the sender's choice of signal,
it creates no ambiguity to suppress the set of realizations and only denote it by
the (finite-support) distribution $\pi$,
where it is understood that the implied set of signal realizations is
\begin{equation*}
	\mathcal{S} = \supp \pi_S = \{s \in [0,1] : \pi(0,s) + \pi(1,s) > 0\},
\end{equation*}
where $s \mapsto \pi_S(s) = \pi(0,s) + \pi(1,s)$
denotes the marginal distribution of the signal.

We can therefore denote a strategy for the sender by a pair $(\pi,\mu)$,
where $\pi$ is the signal chosen and
$\mu$ is her messaging mixed strategy:
$(\pi,s) \mapsto \mu(\pi,s)$ assigns to every $(\pi,s)$ an element of $\Delta(M(s))$.
Denote by $\beta: \mathcal{M} \to [0,1]$ the receiver's posterior belief about the state.
Given the triple $(\pi,\mu,\beta)$,
say that type $s \in [0,1]$ is on-path iff $s \in \supp \pi_S$
and that message $m \in \mathcal{M}$ is on-path
iff
\begin{equation*}
	m \in \supp \mu(\cdot|\pi,s) = \{ m \in M(s) : \mu(m|\pi,s) > 0 \}
\end{equation*}
for some $s \in [0,1]$ on-path.

The solution concept is in the spirit of perfect Bayesian equilibrium:
the sender acquires and transmits information optimally
given how the receiver forms beliefs
and the receiver's posterior belief about the state
is formed using Bayes' rule on the equilibrium path
and is consistent with the verifiability structure off-path.
As the sender's optimality conditions and
the receiver's on-path Bayesian updating are standard,
their formal statement is relegated to Appendix~\ref{app:eq-def-exist}.
Off-path, consistency of the receiver's belief
with the verifiability structure requires that
\begin{equation}
	\label{eq:conv-hull}
	\beta(m) \in \conv M^{-1}(m) \quad \text{for every} \quad m \in \mathcal{M},
\end{equation}
where $\conv M^{-1}(m)$ denotes the convex hull of $M^{-1}(m)$.

To see why this is the appropriate
equilibrium restriction on the receiver's off-path beliefs,
notice that following any message $m \in \mathcal{M}$,
the receiver's belief \emph{about the sender's type} must be supported on $M^{-1}(m)$.
Since the verifiability structure $M$
is such that beliefs determine messages
(\Cref{def:bdm}),
the receiver must also believe that the sender's
belief \emph{about the state} is an element of $M^{-1}(m)$,
so that her own belief about the state---$\beta(m)$---must reflect this
and lie in $\conv M^{-1}(m)$.

Finally, the following lemma
provides a continuity condition
on the verifiability structure
ensuring the existence of an equilibrium.

\begin{lemma}[Existence]
	\label{lemma:existence}
	If $s \mapsto \max_{m \in M(s)} \min M^{-1}(m)$ for $s \in [0,1]$
	is upper semi-continuous,
	an equilibrium exists.
\end{lemma}

The proof is in Appendix~\ref{app:eq-def-exist}.
The condition guarantees that the sender,
when facing a skeptical receiver assigning belief
$\min M^{-1}(m)$ to every message $m$,
has an optimal choice of signal.
A sufficient condition for this continuity requirement
is that the set of all possible messages $\mathcal{M}$ is finite.

Since this assumption is not necessary to state
the results in the next section, it is not maintained.
It will be explicitly invoked in \Cref{sec:cs},
where equilibrium existence is required
to state the comparative statics results.

\section{Unpersuasive and unraveling equilibria}
\label{sec:equilibria}

Recall that---as discussed in the previous section---we restrict attention to
non-decreasing and upper semi-continuous payoff functions $v$
and verifiability structures $M$ in which beliefs determine messages
(\Cref{def:bdm}) and such that \eqref{eq:inf-M} holds.
We will refer to parameters in this class as \emph{admissible}.

\begin{definition}
	\label{def:lc}
	For a given triple of admissible parameters $(v, \overline{p}, M)$,
	say that type $s \in [0,1]$ is \emph{lowest-consistent} with message $m$ iff $s = \min M^{-1}(m)$.
\end{definition}
That is, sender type $s \in [0,1]$ is lowest-consistent
with message $m$ if no lower type can send $m$.
Type $s$ can therefore separate from every lower
type---regardless of the receiver's belief---by sending $m$.%
\footnote{In the terminology of \citet{Seidmann1997},
sender type $s$ is lowest-consistent with some message $m$
if it is the lowest of the worst-case types for $M^{-1}(m)$.}
The maintained assumption that beliefs determine messages
further implies that the receiver's belief about $\omega = 1$
following message $m$ must be at least $s$
(both on- and off-path)
as captured by equilibrium condition~\eqref{eq:conv-hull}.

Note that \Cref{def:lc}
does not rely on the verifiability structure $M$
being such that beliefs determine messages.%
\footnote{It does, however, rely on the
signal realizations being in $[0,1]$
and on the sets of available messages
depending only on the signal realization
(and not directly on the signal).
As discussed in \Cref{re:unbias-belief-mes},
these restrictions entail no loss of generality
once $M$ is assumed to be
such that beliefs determine messages.}
If beliefs do not determine messages, however,
even if sender type $s \in [0,1]$ is lowest-consistent with message $m$,
the receiver could hold---off-path---any belief
\emph{about the state} following $m$.
This is because the sender faces no constraints
on the joint distribution of state and signal,
other than the requirement that the state marginal must equal the prior.
So, even if the receiver's belief
\emph{about the sender's type}
would still have to be supported
on types at least equal to $s$,
this would not translate
into any restriction on her belief
about the state.

The following terminology will also prove useful in stating the results.

\begin{definition}
	Given a sender strategy $(\pi,\mu)$ and receiver belief $\beta$,
	say that the receiver is \emph{maximally skeptical} following message $m$ iff $\beta(m) = \min M^{-1}(m)$.
	Say that the sender \emph{reveals all acquired information}
	iff for every $s \in \supp \pi_S$ and every $m \in \supp \mu(\cdot|\pi,s)$, $\beta(m) = s$.
\end{definition}

Equipped with these definitions,
we can turn to studying the properties of equilibria.

\begin{definition}[Unpersuasive equilibria]
	An equilibrium $(\pi,\mu,\beta)$ is \emph{unpersuasive} iff $v(\beta(m)) = v(\overline{p})$ for every message $m$ on-path.
\end{definition}

In unpersuasive equilibria,
any variation in the receiver's belief
is payoff-irrelevant for the sender:
she obtains the same payoff
she would achieve without
information acquisition and transmission,
with probability one.
The first result shows that,
in equilibria which are \emph{not} unpersuasive,
the sender's behavior and outcomes 
are tightly pinned down by the primitives of the model.
\begin{theorem}[Unraveling equilibria]
	\label{thm:unraveling}
	Suppose that $(\pi,\mu,\beta)$ is an equilibrium which is not unpersuasive.
	Then:
	\begin{enumerate}[label=(\roman*)]
		\item \label{thm:unraveling:lc} Every on-path sender type $s$ is lowest-consistent with every $m \in \supp \mu(\cdot|\pi,s)$.
		\item \label{thm:unraveling:skept} The receiver is maximally skeptical following every on-path message.
		\item \label{thm:unraveling:reveal} The sender reveals all acquired information.
	\end{enumerate}
	We call such equilibria \emph{unraveling}.
\end{theorem}
The proof is in Appendix~\ref{app:proof-main-thm}.
Notice that \Cref{thm:unraveling}~\ref{thm:unraveling:lc} is saying two things at once.
First, it states that only types that are lowest-consistent
with \emph{some} message can be on-path.
The verifiability structure $M$ therefore restricts the set of
possible outcomes of the sender's equilibrium information acquisition
to those that can then be credibly communicated to the receiver
(i.e., to types that are lowest-consistent with some message).
Second, it is saying that every on-path sender type will, indeed,
only use messages she is lowest-consistent with in equilibrium.

Note also that the theorem is not making the stronger claim
that the equilibrium set can be partitioned into
unpersuasive and unraveling equilibria,
as an equilibrium may be both.%
\footnote{To illustrate,
adjust the example in \Cref{sec:intro:example-overview}
by increasing the prior to $\overline{p} = 0.5$.
It can be verified that
there is an equilibrium in which
the sender acquires no information
and sends message $m_G$ on-path.
This equilibrium is both unpersuasive and unraveling.}

The logic behind the theorem is driven by the fact that,
if the receiver were not maximally skeptical
following an on-path message,
the sender could deviate to a signal
`worsening the meaning' of that message,
i.e., a signal such that the message
ends up being sent by a lower type.
This is profitable for the sender
as the set of on-path messages is unchanged,
but probability is shifted towards messages
that lead to higher payoffs.

This deviation arises whenever the receiver's equilibrium beliefs
vary in a way that affects the sender's payoff---i.e.,
in equilibria that are not unpersuasive.
It is precisely this variation in payoff
that gives the sender an incentive
to shift probability from `bad news' messages to `good news' ones.
This temptation evidently vanishes
when all on-path messages yield the same payoff.

Notice how this reasoning relies
on the assumption that beliefs determine messages,
which aligns the receiver's skepticism about the sender's type
with her skepticism about the state.
If type $s$ is lowest-consistent with message $m$,
the sender cannot \emph{arbitrarily} worsen its meaning,
as she must believe that $\omega = 1$ with probability
at least $s$ when sending $m$.
Consequently, the receiver's belief about the state following $m$
cannot fall below $s$,
as implied by equilibrium condition~\eqref{eq:conv-hull}.

Finally, this intuition also hinges on
the monotonicity and binary state assumptions,
as they ensure that worsening the meaning of a message
never corresponds to a decrease in the probability
that a `good news' message is sent.
Outside of the binary-state case
with monotone payoff,
this need not be the case.

\subsection{The sender's value in unraveling equilibria}

\Cref{thm:unraveling} also directly implies
that the sender obtains the same expected payoff across all unraveling equilibria.
This expected payoff can be expressed as the concave envelope of a skepticism-adjusted version of $v$.
Letting
\begin{equation}
	\label{eq:max-skept-value-posteriors}
	\underline{v}_M (s) \equiv v \left( \max_{m \in M(s)} \min M^{-1}(m) \right),
\end{equation}
for each $s \in [0,1]$, we have the following corollary of \Cref{thm:unraveling}.

\begin{corollary}[Unique value of persuasion]
	\label{cor:value}
	In every unraveling equilibrium
	the sender's expected payoff is $(\cav \underline{v}_M) (\overline{p})$.
\end{corollary}

That is, the sender's equilibrium expected payoff
is the concave envelope of the `value of interim sender beliefs',
when facing a receiver who is maximally skeptical following every message.
In the example from \Cref{sec:intro:example-overview},
these objects are depicted in \Cref{fig:3-act-example:cld}:
$\underline{v}_M$ is the solid black line,
while $\cav \underline{v}_M$ is the thick dotted line
and $(\cav \underline{v}_M) (\overline{p})$ is the black dot.

\begin{proof}[Proof of \Cref{cor:value}]
	Consider any unraveling equilibrium $(\pi,\mu,\beta)$
	and define $w_\beta(s) \equiv \max_{m \in M(s)} v(\beta(m))$, $s \in [0,1]$.
	The sender's equilibrium expected payoff is $(\cav w_\beta) (\overline{p})$,
	since---for a given $\beta$ and optimally chosen messages---the sender faces
	a canonical Bayesian persuasion problem
	with value of posteriors $w_\beta$.
	Since $w_\beta \ge \underline{v}_M$,
	we have that $(\cav w_\beta) (\overline{p}) \ge (\cav \underline{v}_M) (\overline{p})$.

	Observe next that $\underline{v}_M = w_\beta$ on $\supp \pi_S$.
	This holds because, for every $s \in \supp \pi_S$,
	\Cref{thm:unraveling}~\ref{thm:unraveling:lc} implies that $s$
	is lowest-consistent with some message,
	so that $\underline{v}_M(s) = v(s)$ by construction
	and \Cref{thm:unraveling}~\ref{thm:unraveling:reveal} implies that
	$w_\beta(s) = v(s)$.
	It follows that, in an auxiliary Bayesian persuasion problem
	in which the sender faces value of posteriors $\underline{v}_M$,
	the sender can obtain an expected payoff of
	$(\cav w_\beta) (\overline{p})$ by choosing signal $\pi$.
	Since $(\cav \underline{v}_M) (\overline{p})$
	is the highest expected payoff
	the sender can attain in this problem, it must be that
	$(\cav w_\beta) (\overline{p}) \le (\cav \underline{v}_M) (\overline{p})$.
\end{proof}

\subsection{Proving news better than the prior}
\label{sec:equilibria:pnbp}

We have so far established that any equilibrium
which is not unpersuasive must be unraveling,
meaning that it exhibits the properties described in \Cref{thm:unraveling}.
We now provide a sufficient condition on the primitives of the model
which ensures that all equilibria are unraveling.
\begin{definition}[PNBP]
\label{def:pnbp}
	For a given triple of admissible parameters $(v, \overline{p}, M)$, say that the sender can \emph{prove news better than the prior} iff $v(\min M^{-1}(m_h)) > v(\overline{p})$ for some $m_h \in \mathcal{M}$.
\end{definition}
Being able to prove news better than the prior means that
the sender can---by acquiring information in a way
that makes sending message $m_h$ possible---induce
a belief in the receiver
that she strictly prefers over the prior.
Importantly, this is possible irrespective of the receiver's belief
(i.e., even if she is maximally skeptical,
assigning belief $\min M^{-1}(m_h)$ to message $m_h$).

PNBP can hold only if,
in the terminology of \citet{Kamenica2011},
there is \emph{information the sender would share}.
That is, only if the sender is not already obtaining
her highest possible payoff at the prior.
Notice that in full-verifiability settings
(Examples~\ref{ex:mandatory-disclosure} and \ref{ex:full-verifiability})
this condition is also sufficient for PNBP.
In contrast, in cheap talk environments (\Cref{ex:cheap-talk})
the sender cannot prove news better than the prior
for any admissible $v$ and $\overline{p}$
(so even if there is information she would share).

\begin{theorem}[PNBP, unraveling and persuasion]
	\label{thm:pnbp}
	If the sender can prove news better than the prior,
	every equilibrium is unraveling.
	If instead the sender cannot prove news better than the prior,
	there exists a sender-preferred unpersuasive equilibrium
	in which she acquires no information.
\end{theorem}

The proof is in Appendix~\ref{sec:proof_of_theorem_pnbp}.
\Cref{thm:pnbp} implies that,
if the sender can prove news better than the prior,
her equilibrium expected payoff is unique,
because of \Cref{cor:value}.%
\footnote{PNBP does not ensure that
the sender's expected payoff in equilibrium is at least $v(\overline{p})$.
For instance, in the example
presented in \Cref{sec:cs:more-ver-hurt} 
verifiability structure $M''$ leads to 
a unique equilibrium which is unraveling
and such that the sender's expected payoff
is strictly below $v(\overline{p})$.}
In contrast, if the sender cannot prove news better than the prior,
any equilibrium which is not unpersuasive---i.e., which is unraveling,
in light of \Cref{thm:unraveling}---leads to
an expected sender payoff not higher than $v(\overline{p})$.%
\footnote{It is straightforward to construct examples
of such unraveling equilibria.
For example, modify the example in \Cref{sec:intro:example-overview}
by increasing the prior from $1/4$ to $9/20$,
so that PNBP no longer holds.
In this case, there is an unraveling equilibrium in which
the receiver is maximally skeptical following every message
and the sender obtains an expected payoff strictly below $v(\overline{p})$.}

When PNBP fails, therefore, the sender's bias in favor of state $1$
eliminates the possibility of valuable persuasion.
That is, in equilibrium the sender can never obtain an expected payoff
above the one she would achieve with no information acquisition (and transmission).
An immediate consequence is that in `cheap talk' environments
(\Cref{ex:cheap-talk}) covert learning
has no value for the sender.

Why is PNBP sufficient for unraveling?
In light of \Cref{thm:unraveling},
it is enough to argue that if PNBP holds and an equilibrium is unpersuasive,
then it must also be unraveling.
The intuition is as follows.
PNBP guarantees the existence of a message conveying `good news'---that is,
a message that yields the sender a payoff
strictly above $v(\overline{p})$,
even under maximal receiver skepticism.
In an unpersuasive equilibrium,
the sender must obtain $v(\overline{p})$ with probability one,
so she must be deterred from deviating to a strategy
in which she acquires information,
sends the `good news' message when she indeed discovers good news,
and otherwise falls back on an on-path message that delivers $v(\overline{p})$.

Such a deviation can be ruled out only if,
in equilibrium, the sender learns nothing,
and the prior itself is lowest-consistent
with some message $m$ the sender uses.
In this case, $m$ can no longer serve as a fallback
in the deviation just described.
But then, the unpersuasive equilibrium is also an unraveling one.

To conclude this section and illustrate its results,
it is informative to consider their implications
in a setting with Dye verifiability,
as introduced in \Cref{ex:dye}.

\begin{namedthm}[\Cref{ex:dye} {\normalfont (Dye verifiability, continued)}.]
\label{ex:dye-cont}
Assume that $v(p) = p$,
so that the sender obtains a payoff equal
to the expected value of the state.
This specialization of the model
brings it closest to the set-up studied in \citet{DeMarzo2019}.
Assume, to rule out trivial cases, that $\overline{p} \in (0,1)$.
My results directly imply the following:
(i) the sender's equilibrium expected payoff is equal to $v(\overline{p})$,
(ii) the receiver's belief following the non-disclosure message $m_\emptyset$
is equal to $0$ in all equilibria.

Point (i) holds because
an equilibrium exists%
\footnote{\Cref{lemma:existence}
cannot be invoked for equilibrium existence,
since the sufficient condition it relies on fails in this example.
It is immediate that an equilibrium exists, however.
For example, this is an equilibrium:
the sender perfectly learns the value of the state
(so $\pi_S(1) = \overline{p}$ and $\pi_S(0) = 1 - \overline{p}$)
and sends message $s$ when her type is $s \ne \overline{p}$;
the receiver's belief is $\beta(m_\emptyset) = 0$
and $\beta(s) = s$ for $s \in [0,1] \setminus \{ \overline{p} \}$.}
and all equilibria are unraveling
(because PNBP holds, so \Cref{thm:pnbp} applies)
so they all lead to the same expected payoff for the sender,
which is equal to $(\cav \underline{v}_M)(\overline{p}) = v(\overline{p})$,
by \Cref{cor:value}.

To see why (ii) holds, suppose otherwise that
the receiver's belief following the non-disclosure message
$m_\emptyset$ were equal to some $\beta(m_\emptyset) > 0$ in equilibrium,
implying that the sender's payoff
from sending message $m_\emptyset$ would strictly exceed $v(0)$.
The sender could then deviate to
acquiring full information,
revealing what she learns when $\omega = 1$
(by sending message~$1$---recall that $M(1) = \{1,m_\emptyset\}$)
and withholding her findings when $\omega = 0$
(by sending message~$m_\emptyset$---recall that $M(0) = \{0,m_\emptyset\}$).
This would lead to an expected payoff equal to
\begin{equation*}
	\overline{p} v(1) + (1-\overline{p}) v(\beta(m_\emptyset))
	> \overline{p} v(1) + (1-\overline{p}) v(0) = v(\overline{p}),
\end{equation*}
meaning that there cannot be an equilibrium
in which the sender's expected payoff
is equal to $v(\overline{p})$,
contradicting (i).

These two observations correspond to the
`minimum principle'
in \citet{DeMarzo2019} (their Proposition 1).
The most substantial difference
between this specialization of my model and theirs
is that they consider a setting
in which the sender is \emph{constrained}
in her information acquisition choice in the sense that
all available signals put positive probability
on the unverifiable signal realization.
Therefore, in their model
the non-disclosure posterior belief
may strictly exceed zero.
They show that this belief must be minimal, in equilibrium. 
In contrast, my sender is able to choose any signal,
so minimality of the non-disclosure posterior belief
means that it must be equal to zero. \demo
\end{namedthm}

\section{Verifiability comparative statics}
\label{sec:cs}

How does the sender's equilibrium payoff change
as the extent to which she can misreport her acquired information varies?
Which shifts towards `more verifiability' are desirable from the sender's perspective?

To address these questions,
I first provide a pre-order on the set of verifiability structures,
formalizing an intuitive notion of `more verifiability' in the context of the model.
I then show that if the sender can prove news better than the prior,
she obtains weakly better equilibrium outcomes under higher-ordered structures.
I illustrate that the proposed pre-order is the appropriate notion
for comparing verifiability structures in this context
by providing a converse:
unordered shifts may strictly hurt the sender,
even when she can prove news better than the prior.

I then show that the result is tight:
if the sender cannot prove news better than the prior,
meaning that unpersuasive equilibria exist,
increases in verifiability may---perhaps surprisingly---hurt her in equilibrium.
I conclude the section by characterizing verifiability structures
that are optimal for the sender.

\subsection{More verifiability}
The notion of `more verifiability' that I introduce
captures the idea that, for all sender
types, separation possibilities from lower types do not decrease.
More precisely, the requirement is that
if a sender type can separate from all lower types by means of a single message,
she must also be able to do so in the higher-ordered structure.

Given verifiability structure $M$,
let the \emph{lowest-consistent set} be
the set of types that are lowest-consistent with some message
(recall \Cref{def:lc})
and denote it by $L_M \subseteq [0,1]$.

\begin{definition}[Larger lowest-consistent set]
	Given verifiability structures $M''$ and $M'$,
	say that $M''$ has a \emph{larger lowest-consistent set than} $M'$,
	and denote it by $M'' \succeq^\mathrm{lc} M'$,
	iff $L_{M''} \supseteq L_{M'}$.
\end{definition}

Notice that $\succeq^\mathrm{lc}$ is a pre-order on the set of admissible verifiability structures.%
\footnote{$\succeq^\mathrm{lc}$ is not anti-symmetric.
For example, consider verifiability structure $M'$
with $M'(s) = \{m_\emptyset\}$ for each $s \in [0,1]$
and $M''$ with $M''(s) = \{m_\emptyset,\hat{m}\}$ for each $s \in [0,1]$.
Then $M'' \succeq^\mathrm{lc} M' \succeq^\mathrm{lc} M''$ but $M' \neq M''$.}
I discuss next two examples of $\succeq^\mathrm{lc}$-shifts.

\begin{example}[Partition refinement]
	\label{ex:partition-refine}
	As mentioned in \Cref{ex:partition},
	we can interpret interval-partition
	verifiability structures as scoring protocols:
	the sender can prove (perhaps via a third-party certifier)
	that her type lies in a certain interval,
	which corresponds to a score.
	It turns out that,
	if the sender chooses a certifier
	with a finer grid of scores
	(or if regulation imposes such a refinement),
	we have a  $\succeq^\mathrm{lc}$-increase
	in the verifiability structure.
	In this context, the comparative statics question can be cast as:
	when will the sender benefit, ex ante, from a finer certification technology?

	Formally, for some given left-closed
	partition $\mathcal{P}$ of $[0,1]$
	(that is: every $P \in \mathcal{P}$ is left-closed)
	let $M_\mathcal{P}$ denote a partitional verifiability structure as defined in \Cref{ex:partition}.
	If $\mathcal{P}''$ is a refinement of $\mathcal{P}'$
	then $M_{\mathcal{P}''} \succeq^\mathrm{lc} M_{\mathcal{P}'}$.
	For \emph{interval} partitions, also the converse holds:
	if $\mathcal{P}''$ is not a refinement of $\mathcal{P}'$
	then $M_{\mathcal{P}''} \not \succeq^\mathrm{lc} M_{\mathcal{P}'}$.
	\demo
\end{example}

\begin{example}[Adding a message]
	\label{ex:add-message}
	Another natural way in which a verifiability structure may change
	is if a new message is added (or removed).
	To illustrate this starkly,
	suppose we start in a situation in which all talk is cheap:
	any message can be sent by all sender types
	(for example because all evidence is easily falsified,
	or because no hard evidence can be presented for legal or technological reasons).

	The environment now changes:
	a non-falsifiable piece of evidence
	is discovered (and can be presented to the receiver)
	if and only if the sender learns that $\omega = 1$.
	This constitutes a `new message',
	which the sender can use
	only at the degenerate belief $s = 1$,
	and that she can therefore exploit
	to credibly convey her acquired information to the receiver.
	Adding a message in this sense corresponds to a $\succeq^\mathrm{lc}$-increase
	in the verifiability structure.

	To illustrate further, a `message'
	might also be added to an initial verifiability structure
	by allowing for `the right to remain silent'
	when it was previously not permitted.
	This corresponds, for example, to the sender adding a clause
	to her contract with a third-party certifier,
	allowing her to prevent him from publishing the results
	of the test after having observed them.

	More generally, fix some verifiability structure $M'$,
	some `new message' $m \notin \cup_{s \in [0,1]} M'(s)$
	and a set of types $S_m \subseteq [0,1]$.
	Construct $M''$ as follows:
	$M''(s) \equiv M'(s)$ if $s \notin S_m$ and $M''(s) \equiv M'(s) \cup \{m\}$ if $s \in S_m$.
	So $M''$ is the same as $M'$, except that types in $S_m$ have access to the `new message' $m$.
	It can be easily verified that $M'' \succeq^\mathrm{lc} M'$.
	\demo
\end{example}

\subsection{Comparative statics result}
Let $\underline{V}_M^*(v,\overline{p})$ and $\overline{V}_M^*(v,\overline{p})$
denote the lowest and highest equilibrium expected payoffs
attainable for the sender at triple $(v,\overline{p},M)$, respectively.
To ensure that these values are well-defined,
for the rest of this section
we will restrict attention to verifiability structures
satisfying the continuity requirement stated in the existence lemma
(\Cref{lemma:existence}).

Recall that \Cref{thm:pnbp} and \Cref{cor:value} together imply that
$\overline{V}_M^*(v,\overline{p})
= \underline{V}_M^*(v,\overline{p})
= (\cav \underline{v}_M) (\overline{p})$
if PNBP holds
and $\overline{V}_M^*(v,\overline{p}) = v(\overline{p})$ if it does not.

\begin{theorem}[More verifiability]
	\label{thm:cs}
	Consider verifiability structures $M'$ and $M''$. Then
	\begin{equation*}
		M'' \succeq^\mathrm{lc} M' \quad \Leftrightarrow \quad \overline{V}_{M''}^*(v,\overline{p}) \ge \overline{V}_{M'}^*(v,\overline{p})
	\end{equation*}
	for every admissible $v$ and $\overline{p}$ such that at $(v,\overline{p},M')$ the sender can prove news better than the prior.
\end{theorem}

The proof is in Appendix~\ref{sec:proof-thm-cs}.
The result can be directly applied to compare the verifiability structures
introduced in the context of the examples.

\begin{remark}
	\label{re:cs-examples}
	If the sender can prove news better than the prior,
	`adding a message' to any verifiability structure
	(as defined in \Cref{ex:add-message})
	is beneficial for her.
	That is also the case for refining partitional structures
	(as introduced in \Cref{ex:partition} and discussed further in \Cref{ex:partition-refine}).
	For partitional structures,
	the converse applies to interval partitions
	(see, again, \Cref{ex:partition} for a definition):
	if an interval partition is not a refinement of another,
	the sender will obtain a strictly better equilibrium outcome in the latter,
	for some admissible $v$ and $\overline{p}$.
\end{remark}

\Cref{thm:cs} also characterizes
which changes in the sender's commitment
to reveal acquired information
are valuable for her.
In this interpretation,
following signal realization $s$,
the sender can commit to using a message from the set $M(s)$.
Therefore, $\succeq^\mathrm{lc}$-increases in $M$
are exactly the changes
in the sender's commitment power
which are beneficial for her.

\Cref{thm:cs} follows straightforwardly from the following lemma
of independent interest
(also proved in Appendix~\ref{sec:proof-thm-cs}).

\begin{lemma}
	\label{lemma:CS}
	Consider verifiability structures $M'$ and $M''$. Then
	\begin{equation*}
		M'' \succeq^\mathrm{lc} M' \quad \Leftrightarrow \quad \underline{V}_{M''}^*(v,\overline{p}) \ge \underline{V}_{M'}^*(v,\overline{p})
	\end{equation*}
	for every admissible $v$ and $\overline{p}$ such that at $(v,\overline{p},M')$ an unraveling equilibrium exists.
\end{lemma}

Therefore, whenever an unraveling equilibrium exists
(regardless of whether PNBP holds or not)
a $\succeq^\mathrm{lc}$-increase in verifiability
cannot lower the sender-worst equilibrium payoff.
Conversely, changes in the verifiability structure that do not increase
(in the set-inclusion sense) which types are lowest-consistent
may do so.

\begin{remark}[More separation possibilities]
\label{re:more-sep}
An alternative (natural) notion of `more verifiability'
captures the idea that, for all sender types,
separation possibilities increase.
Formally, for a fixed verifiability structure $M$ and message $m \in \mathcal{M}$,
let $S_m^c \equiv [0,1] \setminus M^{-1}(m)$ denote the complement of $M^{-1}(m)$ in $[0,1]$.
That is, $S_m^c$ is the set of types that cannot send message $m$.
For a given $s \in [0,1]$, let
\begin{equation*}
 	K_M(s) \equiv \{ S \in 2^{[0,1]} : S = S_m^c \; \text{for some} \; m \in M(s)\}
\end{equation*}
denote the collection of type sets
from which $s$ can separate by means of a single message,
no matter what the receiver's belief is.
Given verifiability structures $M''$ and $M'$,
say that $M''$ exhibits \emph{more separation possibilities than} $M'$,
and denote it by $M'' \succeq M'$,
iff $K_{M''}(s) \supseteq K_{M'}(s)$ for all $s \in [0,1]$.

It can be easily verified that $M'' \succeq M'$ $\Rightarrow$ $M'' \succeq^\mathrm{lc} M'$
while $M'' \succeq^\mathrm{lc} M'$ $\not \Rightarrow$ $M'' \succeq M'$.
Shifts to a $\succeq$-higher structure are therefore \emph{sufficient}
to weakly improve the sender's equilibrium payoff
(provided she can prove news better than the prior)
but they are \emph{not necessary} (as highlighted in \Cref{thm:cs})
because of the monotonicity of $v$.
\end{remark}

\subsection{More verifiability may hurt the sender}
\label{sec:cs:more-ver-hurt}

At first glance,
moving to a $\succeq^\mathrm{lc}$-higher
(or, perhaps, even $\succeq$-higher, as introduced in \Cref{re:more-sep})
structure
might appear (via a standard replication argument)
to be \emph{always} beneficial for the sender,
regardless of whether she can prove news better than the prior
or whether an unraveling equilibrium exists.
The following example demonstrates that this is not the case:
the mere existence of a verifiable `good news' message can
taint the meaning of `claiming ignorance',
turning the inability to produce good news
into bad news---thereby hurting the sender.

Fix some finite set of messages $\bar{M}$
and some message $m_E \notin \bar{M}$.
Let verifiability structure $M'$ be such that $M'(s) = \bar{M}$ for each $s \in [0,1]$,
while $M''$ is such that $M''(s) = \bar{M}$ for $s < 0.9$
and $M''(s) = \bar{M} \cup \{m_E\}$ for $s \ge 0.9$.
$M'$ is therefore a cheap talk structure (as introduced in \Cref{ex:cheap-talk}).
$M''$ lets the sender use any of the cheap talk messages in $\bar{M}$,
but also allows her to prove excellent news
(namely: if she discovers that the state is more than 90\% likely to be $1$,
she can prove it by sending message $m_E$).
It is immediate that $M'' \succeq M'$
(and thus $M'' \succeq^\mathrm{lc} M'$).

Finally, let $v$ be defined as follows:
$v(p) = 0$ for $p < 0.4$,
$v(p) = 2$ for $p \in [0.4,0.8)$
and $v(p) = 3$ for $p \ge 0.8$.
Fix the prior to be $\overline{p} = 1/2$.

I now argue, applying directly the results from \Cref{sec:equilibria},
that the (unique) equilibrium payoff with structure $M''$
is strictly lower than the (unique) equilibrium payoff with structure $M'$.

Since under $M'$ the sender cannot prove news better than the prior,
\Cref{thm:pnbp} implies that the sender's highest expected payoff
in equilibrium is $v(\overline{p}) = 2$ (in this example, it is immediate that
this is the sender's unique equilibrium payoff),
as illustrated in \Cref{fig:more-ver-hurt:less-ver}
for the case in which there is a single cheap talk message: $\bar{M} = \{m_\emptyset\}$.
Under $M''$ however, the sender can prove news better than the prior,
so \Cref{thm:pnbp} and \Cref{cor:value} together imply
that the sender's equilibrium expected payoff
is $5/3 < 2$, as illustrated in \Cref{fig:more-ver-hurt:more-ver}.

\begin{figure}[ht]
	\centering
	\begin{subfigure}[h]{0.45\textwidth}
		\begin{tikzpicture}[scale=0.5]

  % === Axes ===
  \draw (0,0) -- (10,0);   % horizontal axis
  \draw[->] (0,0) -- (0,5.5); % vertical axis

  % === Tick marks & labels ===
  \draw (0,0.15)  -- (0,-0.15)   node[below] {\small $0$};
  \draw (10,0.15) -- (10,-0.15)  node[below] {\small $1$};
  \draw (4,0.15)  -- (4,-0.15)   node[below] {\small $0.4$};
  \draw (5,0.15)  -- (5,-0.15)   node[below] {\small $\overline{p}$};
  \draw (8,0.15)  -- (8,-0.15)   node[below] {\small $0.8$};
  \draw (0.15,20/6) -- (-0.15,20/6) node[left] {\small $2$};
  \draw (0.15,10/2) -- (-0.15,10/2) node[left] {\small $3$};

  % === Value function (light gray) ===
  \draw[very thick, gray] (0,0) -- (4,0);
  \draw[very thick, gray] (4,20/6) -- (8,20/6);
  \draw[very thick, gray] (8,10/2) -- (10,10/2);
  % Markers (light gray)
  \draw[fill, gray] (4,20/6) circle (0.075);
  \draw[fill, gray] (8,10/2) circle (0.075);
  \draw[draw=gray, fill=white] (4,0)    circle (0.075);
  \draw[draw=gray, fill=white] (8,20/6) circle (0.075);

  % === Skeptical value (black) ===
  \draw[thick] (0,20/6) -- (10,20/6);
  
  % === Equilibrium payoff ===
  \draw[draw=black, fill=black] (5,20/6) circle (0.15);

  % === Messages ===
  \draw (0,-1.5) -- (10,-1.5) node[right] {\small $m_\emptyset$};
  % Message ticks
  \draw (0,-1.4) -- (0,-1.6);
  \draw (10,-1.4) -- (10,-1.6);

  % Phantom message for spacing consistency in other figures
  \phantom{
  \draw (9,-2.2) -- (10,-2.2) node[right] {\small $m_E$};
  \draw (9,-2.1) -- (9,-2.3);
  \draw (10,-2.1) -- (10,-2.3);
  }

\end{tikzpicture}
		\caption{Verifiability structure $M'$}
		\label{fig:more-ver-hurt:less-ver}
	\end{subfigure}
	\qquad
	\begin{subfigure}[h]{0.45\textwidth}
		\begin{tikzpicture}[scale=0.5]

  % === Axes ===
  \draw (0,0) -- (10,0);   % horizontal axis
  \draw[->] (0,0) -- (0,5.5); % vertical axis

  % === Tick marks & labels ===
  \draw (0,0.15)  -- (0,-0.15)   node[below] {\small $0$};
  \draw (10,0.15) -- (10,-0.15)  node[below] {\small $1$};
  \draw (5,0.15)  -- (5,-0.15)   node[below] {\small $\overline{p}$};
  \draw (9,0.15)  -- (9,-0.15)   node[below] {\small $0.9$};
  \draw (0.15,10/3) -- (-0.15,10/3) node[left] {\small $2$};
  \draw (0.15,10/2) -- (-0.15,10/2) node[left] {\small $3$};

  % === Value function (light gray) ===
  \draw[very thick, gray] (0,0)       -- (4,0);      % v=0 on [0,0.4]
  \draw[very thick, gray] (4,10/3)    -- (8,10/3);   % v=2 on [0.4,0.8]
  \draw[very thick, gray] (8,10/2)    -- (10,10/2);  % v=3 on [0.8,1]
  % Markers (light gray)
  \draw[fill, gray] (4,10/3) circle (0.075);        % filled at (0.4, v=2)
  \draw[fill, gray] (8,10/2) circle (0.075);        % filled at (0.8, v=3)
  \draw[draw=gray, fill=white] (4,0)    circle (0.075); % hollow at (0.4, 0)
  \draw[draw=gray, fill=white] (8,10/3) circle (0.075); % hollow at (0.8, v=2)

  % === Skeptical value (black) ===
  \draw[thick] (0,0)        -- (9,0);       % skeptical = 0 for p<0.9
  \draw[thick] (9,10/2)     -- (10,10/2);   % skeptical = 3 for p>=0.9
  % Markers (black)
  \draw[draw=black, fill=white] (9,0)  circle (0.075); % open dot at (0.9, 0)
  \draw[fill] (9,10/2)               circle (0.075); % filled dot at (0.9, 3)

  % === Cav (dotted) ===
  \draw[ultra thick, dotted] (0,0) -- (9,10/2) -- (10,10/2);
  \draw[draw=black, fill=black] (5,25/9) circle (0.15); % Equilibrium payoff

  % === Messages ===
  \draw (0,-1.5) -- (10,-1.5) node[right] {\small $m_\emptyset$};
  \draw (9,-2.2) -- (10,-2.2) node[right] {\small $m_E$};
  % Message ticks
  \draw (0,-1.4) -- (0,-1.6);
  \draw (10,-1.4) -- (10,-1.6);
  \draw (9,-2.1) -- (9,-2.3);
  \draw (10,-2.1) -- (10,-2.3);

\end{tikzpicture}
		\caption{Verifiability structure $M''$}
		\label{fig:more-ver-hurt:more-ver}
	\end{subfigure}
	\caption{
	On the left:
	the sender value function
	(solid black)---overlaid over $v$
	(solid gray).
	On the right:
	the sender's skepticism-adjusted value function
	(solid black)---overlaid over $v$
	(solid gray)---and its concave envelope
	(thick, dotted).
	In each case, the sender's equilibrium expected payoff
	is represented by the large black dot.}
	\label{fig:more-ver-hurt}
\end{figure}

The reason why a replication argument fails here is that the sender
cannot commit to acquiring no information under $M''$,
thereby replicating her payoff under $M'$.
Informally, this is the case because `claiming ignorance' is credible under $M'$
(using any cheap talk message in $\bar{M}$),
but it is not under $M''$.
This is because under $M''$ the sender can prove good news (with $m_E$),
so she will seek it, only claiming ignorance
(using a message in $\bar{M}$) when she did not find said good news.
Increasing separation possibilities may therefore worsen the meaning of messages in equilibrium
(in this case: any message in $\bar{M}$ must lead to a payoff of $0$ under $M''$,
while it led to a payoff of $2$ under $M'$).
This can, in turn, lower the equilibrium expected payoff for the sender.

\subsection{Value of commitment and sender-optimal verifiability}

We showed that increasing verifiability,
in the sense of enlarging the set of sender types
that are lowest-consistent with some message,
is both necessary and sufficient to make the sender ex ante weakly better off,
provided that we are considering environments in which she can prove news better than the prior.

In light of this result,
it is straightforward to characterize the structures that are optimal for the sender,
in the sense that her equilibrium expected payoff is highest among all admissible ones.
Note first that the sender can obtain her full-commitment payoff
$(\cav v)(\overline{p})$
(that is: the payoff she would obtain under public information acquisition, as in \citet{Kamenica2011})
with the `mandatory disclosure' verifiability structure
(the one of \Cref{ex:mandatory-disclosure}: $M(s) = \{s\}$, $s \in [0,1]$).
This follows from \Cref{thm:pnbp} and \Cref{cor:value},
if she can prove news better than the prior,
and directly from \Cref{thm:pnbp}, if she cannot.%
\footnote{In more detail:
if with the mandatory disclosure verifiability structure
the sender cannot prove news better than the prior,
it must mean that $v(\overline{p}) = \max_{p \in [0,1]} v(p)$,
so that $(\cav v)(\overline{p}) = v(\overline{p})$.
This payoff is therefore attained in an unpersuasive equilibrium,
which exists by \Cref{thm:pnbp}.}

Therefore, characterizing sender-optimal structures is tantamount to characterizing structures
that allow her to obtain such full-commitment payoff:

\begin{proposition}[Sender-optimal verifiability]
	\label{prop:sender-optimal}
	For a given verifiability structure $M$, the following are equivalent:
	\begin{enumerate}[label=(\roman*)]
		\item In every equilibrium the sender obtains the full-commitment payoff for every admissible $v$ and $\overline{p}$.
		\item Every $s \in [0,1]$ is lowest-consistent with some message.
	\end{enumerate}
\end{proposition}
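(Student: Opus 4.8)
The plan is to prove the two implications in turn. For $(2)\Rightarrow(1)$ I would run a ``sandwich'' argument that bounds the sender's payoff in an \emph{arbitrary} equilibrium both above and below by the full-commitment payoff $(\cav v)(\overline{p})$. For $(1)\Rightarrow(2)$ I would argue the contrapositive and reduce to Theorem~\ref{thm:CS-necessity}.

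\emph{Direction $(2)\Rightarrow(1)$.} Fix admissible $v,\overline{p}$ and any equilibrium $(\pi,\mu,\beta)$. The upper bound $\text{(payoff)}\le(\cav v)(\overline{p})$ holds for every verifiability structure: Bayes' rule on path makes each on-path belief $\beta(m)$ equal to $\mathbb{E}[s\mid m]$, so the distribution of receiver posteriors generated along the equilibrium path has mean $\mathbb{E}[s]=\overline{p}$ (the sender's signal is unbiased); since $v\le\cav v$ and $\cav v$ is concave, the sender's payoff $\mathbb{E}[v(\beta(m))]$ is at most $(\cav v)(\overline{p})$. For the lower bound I use that every $s\in[0,1]$ is lowest-consistent. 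Take any finitely supported Bayes-plausible splitting, placing mass $\rho_j$ on $q_j\in[0,1]$ with $\sum_j\rho_j q_j=\overline{p}$, and for each $j$ choose $m_j$ with $\min M^{-1}(m_j)=q_j$; then $m_j\in M(q_j)$ and $M^{-1}(m_j)\subseteq[q_j,1]$. The sender can deviate to the (feasible) signal that induces type $q_j$ with probability $\rho_j$ and then, following realization $q_j$, sends $m_j$. Consistency of the equilibrium belief system forces $\beta(m_j)\in\conv M^{-1}(m_j)\subseteq[q_j,1]$, so monotonicity of $v$ gives $v(\beta(m_j))\ge v(q_j)$ and the deviation secures at least $\sum_j\rho_j v(q_j)$. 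Taking the supremum over such splittings and using $(\cav v)(\overline{p})=\sup\{\sum_j\rho_j v(q_j):\sum_j\rho_j q_j=\overline{p}\}$ (as in \citet{Kamenica2011}) shows the equilibrium payoff is at least $(\cav v)(\overline{p})$. The two bounds give equality, i.e.\ the full-commitment payoff (cf.\ Example~\ref{ex:mandatory-disclosure}), which is (1).

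\emph{Direction $(1)\Rightarrow(2)$.} I would argue the contrapositive: suppose some $s^*\in[0,1]$ is not lowest-consistent. Since $0$ is lowest-consistent under every structure (any $m\in M(0)$ has $\min M^{-1}(m)=0$), $s^*>0$ and $L_M\subsetneq[0,1]$. Let $M^\dagger$ be the mandatory-disclosure structure of Example~\ref{ex:mandatory-disclosure}; as $L_{M^\dagger}=[0,1]\not\subseteq L_M$ we have $M\not\succeq^{\mathrm{lc}}M^\dagger$. Theorem~\ref{thm:CS-necessity}, applied with lower-ranked structure $M^\dagger$ and higher-ranked structure $M$, then produces admissible $v,\overline{p}$ such that the sender can prove news better than the prior relative to $M^\dagger$ and $\sup W_M(v,\overline{p})<w_{M^\dagger}(v,\overline{p})$. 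Under $M^\dagger$ every type is lowest-consistent, so $v_-=v$ in~\eqref{eq:max-skept-value-posteriors}; since PNBP holds there, Corollary~\ref{cor:sender-value} gives $w_{M^\dagger}(v,\overline{p})=(\cav v_-)(\overline{p})=(\cav v)(\overline{p})$, the full-commitment payoff. Hence for these parameters \emph{every} equilibrium under $M$ yields the sender strictly less than the full-commitment payoff, so (1) fails.

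\emph{Main obstacle.} The delicate step is the lower bound in $(2)\Rightarrow(1)$: it has to hold in \emph{every} equilibrium, not merely the sender-preferred one, so I cannot just display a favourable equilibrium but must show the replicating deviation is available whatever beliefs the equilibrium assigns to the messages $m_j$. This is exactly where lowest-consistency (which pins $\beta(m_j)\ge q_j$) together with the monotonicity of $v$ does the work; drop either and the deviation could be met with an unfavourable off-path belief and need not secure $(\cav v)(\overline{p})$. A minor point to spell out is the identification $w_{M^\dagger}(v,\overline{p})=(\cav v)(\overline{p})$, so that Theorem~\ref{thm:CS-necessity}'s conclusion can be read as a statement about the full-commitment benchmark.
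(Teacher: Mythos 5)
Your proof is correct, and one half of it takes a genuinely different route from the paper. The direction $(1)\Rightarrow(2)$ is essentially the paper's argument: contrapositive, compare $M$ with the mandatory-disclosure benchmark $M^{\textit{md}}$, note $M\not\succeq^{\mathrm{lc}}M^{\textit{md}}$, and invoke Theorem~\ref{thm:CS-necessity}; your extra step identifying $w_{M^{\textit{md}}}(v,\overline{p})=(\cav v)(\overline{p})$ via $v_-=v$ is a detail the paper leaves implicit. The direction $(2)\Rightarrow(1)$ is where you diverge: the paper again goes through the comparative-statics machinery, using $M\succeq^{\mathrm{lc}}M^{\textit{md}}$ and Theorem~\ref{thm:CS-sufficiency} (whose proof rests on Theorem~\ref{thm:equilibrium-structure} and Corollary~\ref{cor:scept-full-rev}), and must therefore split off the case $v(\overline{p})=v(1)$, where PNBP fails and that theorem does not apply. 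Your sandwich argument is self-contained and more elementary: the upper bound is the standard Bayes-plausibility/Jensen bound, and the lower bound is a replication deviation whose credibility is secured exactly by lowest-consistency together with belief consistency, since $\beta(m_j)\in\conv M^{-1}(m_j)\subseteq[q_j,1]$ holds for \emph{any} equilibrium belief system, on- or off-path, and monotonicity of $v$ then yields $v(\beta(m_j))\ge v(q_j)$. This buys you independence from the unraveling results and a uniform treatment of the degenerate case $v(\overline{p})=v(1)$; what it does not illuminate, and what the paper's route does, is that the full-commitment payoff is attained through maximal skepticism and full revelation of acquired information. Both arguments are valid.
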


The characterization therefore highlights precisely in which sense
the full commitment assumption in \citet{Kamenica2011} can be relaxed
without affecting the sender's equilibrium payoff.
Provided that for every sender type there is a feasible message that no lower type can send,
her equilibrium expected payoff matches the full-commitment one.
If instead some type is not lowest-consistent with any message,
it is possible to find a set of admissible parameters
such that her equilibrium expected payoff is strictly lower than the full-commitment one.

\begin{remark}
	\label{re:full-ver-opt}
	Any full-verifiability environment as described in \Cref{ex:full-verifiability} is sender-optimal.
\end{remark}

\begin{proof}[Proof of \Cref{prop:sender-optimal}]
Let $M^{\textit{md}}$ denote the mandatory disclosure verifiability structure
($M^{\textit{md}}(s) = \{s\}$, $s \in [0,1]$).
Consider first (i) $\Rightarrow$ (ii).
Suppose that $M$ is such that some $s \in [0,1]$
is not lowest-consistent with any message.
Then $M \not \succeq^\mathrm{lc} M^{\textit{md}}$.
It follows from \Cref{thm:cs} that there exist admissible $v$ and $\overline{p}$ such that
the sender obtains a payoff strictly below the full-commitment one,
which is the unique outcome with $M^{\textit{md}}$.

Consider now (ii) $\Rightarrow$ (i).
If $v(\overline{p}) = v(1)$ the result is immediate,
since the sender obtains the maximal payoff of $v(\overline{p})$ in every equilibrium.
If $v(\overline{p}) < v(1)$,
the sender can prove news better than the prior with $M^{\textit{md}}$.
Consider any structure $M$ such that
each $s \in [0,1]$ is lowest-consistent with some message.
The sender can also prove news better than the prior with $M$,
so her expected equilibrium payoff is unique
(this follows from \Cref{thm:pnbp} and \Cref{cor:value}).
Since $M \succeq^\mathrm{lc} M^{\textit{md}}$,
she must obtain at least the same payoff with $M$
as she does with $M^{\textit{md}}$ by \Cref{thm:cs}.
This payoff cannot be strictly higher than the full-commitment one,
so they must coincide.
\end{proof}

\section{Receiver-optimal verifiability}
\label{sec:receiver-optimal}

Which verifiability structures are desirable for the receiver?
I show that a simple verifiability structure with two messages
is such that the sender provides full information in equilibrium.%
\footnote{I say that the sender \emph{provides full information} in equilibrium if,
conditional on $\omega = 1$, the receiver holds a posterior belief of $1$
(with probability one)
and, conditional on  $\omega = 0$,
the receiver holds a posterior belief of $0$ (with probability one).}
Provided that the receiver values information about the state,
this structure will therefore be optimal for her across all admissible structures.%
\footnote{In more detail: if the receiver obtains a payoff $u(p)$ at belief $p$,
obtaining full information about the state is optimal
(across all information structures)
if $u$ is convex.
That is the case, for example,
if $u(p)$ is the value of a decision problem under uncertainty
with expected-utility preferences:
there is an action set $A$ and a continuous function
$f : A \times \{0,1\} \to \mathbb{R}$
such that $u(p) = \max_{a \in A} p f(a,1) + (1-p) f(a,0)$
for each $p \in [0,1]$.}
I also provide a converse:
only structures in a specific binary class
are such that,
no matter what the (admissible) preferences and prior of the sender,
she will provide full information in every equilibrium.

\begin{proposition}[Full information provision]
	\label{prop:full-info}
	For a given verifiability structure $M$, the following are equivalent:
	\begin{enumerate}[label=(\roman*)]
		\item In every equilibrium the sender provides full information for every $(v,\overline{p})$ such that $v(\overline{p}) < v(1)$.
		\item Only types $0$ and $1$ are lowest-consistent with some message.
	\end{enumerate}	
\end{proposition}

$v(\overline{p}) < v(1)$ is a very mild requirement on what value the sender can potentially obtain from persuasion.
If otherwise $v(\overline{p}) = v(1)$---recall that $v$ is non-decreasing---the sender would be obtaining her highest possible payoff at the prior so, unsurprisingly, equilibria in which the sender does not provide full information persist.%
\footnote{If $v(\overline{p}) = v(1)$
and only types $0$ and $1$ are lowest-consistent with some message,
it can be shown that there exists
an equilibrium in which the sender provides full information.
However, for the sender such an equilibrium may be (substantially) worse than an unpersuasive equilibrium, which always exists and is sender-optimal, as she cannot prove news better than the prior (this follows from \Cref{thm:pnbp}).}

The class of verifiability structures
leading to full information provision takes an intuitive form.
It can be interpreted as the receiver committing
to only checking the sender's claim of `good news'
if the latter produces evidence that news is \emph{conclusively} good
(i.e., if the sender proves that the state is $1$).

In this situation, the only hope the sender has of swaying the receiver in her favor
is to learn with maximal probability that the state is $1$,
which is achieved by also learning that the state is $0$ with certainty.
Since type $1$ is lowest-consistent with some message
and $v(\overline{p}) < v(1)$,
the sender can prove news better than the prior.
Therefore, \Cref{thm:pnbp} implies that
all acquired information is revealed to the receiver,
so the latter also learns the value of the state.
For the receiver, committing to checking any other intermediate claim
provides the sender with more commitment power,
so may lead to less-than-full information provision.

\begin{proof}[Proof of \Cref{prop:full-info}]
	Consider first (ii) $\Rightarrow$ (i).
	Observe that the sufficient condition stated in \Cref{lemma:existence}
	is satisfied, so an equilibrium exists.
	If type $1$ is lowest-consistent with some message
	and $v(\overline{p}) < v(1)$
	then the sender can prove news better than the prior,
	so all equilibria must be unraveling (\Cref{thm:pnbp}).
	Therefore, in equilibrium,
	$\pi_S(1) = \overline{p}$ and $\pi_S(0) = 1 - \overline{p}$,
	since only $0$ and $1$ are lowest-consistent with some message.
	In any unraveling equilibrium the sender reveals all acquired information
	(\Cref{thm:unraveling}),
	hence the sender provides full information.

	Consider now (i) $\Rightarrow$ (ii).
	Suppose that some $s^* \notin \{0,1\}$
	is also lowest-consistent with some message.
	Let $v(p) = 1_{\{p \ge s^*\}}$ and $\overline{p} = s^*/2$.
	The following is an unraveling equilibrium
	in which the sender acquires
	and provides less-than-full information.

	The receiver is maximally skeptical, 
	assigning belief $\beta(m) \equiv \min M^{-1}(m)$
	to every $m \in \mathcal{M}$.
	The sender chooses a signal that splits the prior on $\{0,s^*\}$,
	sending any available message when $0$ realizes
	and sending a message that type $s^*$ is lowest-consistent with
	when $s^*$ realizes.
	Letting $w_\beta(s) \equiv \max_{m \in M(s)} v(\beta(m))$, $s \in [0,1]$,
	observe that the sender's strategy attains
	an expected payoff equal to
	$(\cav w_\beta)(\overline{p}) = (\cav v)(\overline{p}) = 1/2$,
	and is therefore optimal.
	The receiver's beliefs are updated using Bayes' rule on-path,
	and consistent with the verifiability structure by construction.
	This profile is therefore an unraveling equilibrium, as required.
\end{proof}

\begin{appendices}

\section{Equilibrium definition and existence}
\label{app:eq-def-exist}
\subsection{Equilibrium definition}
Fix a prior $\overline{p} \in [0,1]$
and sender payoff function $v: [0,1] \to \mathbb{R}$.
Let $\Pi$ denote the set of signals
(that is: $\Pi$ is the set of all
finite-support joint distributions
of state and signal realizations
with marginal distribution of the state
coinciding with the prior $\overline{p}$).
A strategy for the sender is a signal $\pi \in \Pi$
and a mixed messaging strategy
$(\pi,s) \mapsto \mu(\pi,s)$
with $\mu(\pi,s) \in \Delta (M(s))$
for each $\pi \in \Pi$ and $s \in [0,1]$
(recall that $M(s)$ denotes the finite set of messages
available to the sender at $s$).
A triple $(\pi,\mu,\beta)$ is an equilibrium if:

\begin{enumerate}[label=(\roman*)]
	\item \emph{Optimal information acquisition.}
	The sender chooses an optimal signal:
  	\begin{multline*}
		\int_{[0,1]} \int_{M(s)} v(\beta(m)) \mathrm{d}\mu(m|\pi,s) \mathrm{d}\pi_S(s) \ge \\ \int_{[0,1]} \int_{M(s)} v(\beta(m)) \mathrm{d}\mu(m|\pi',s) \mathrm{d}\pi_S'(s)
  	\end{multline*}
  	for every $\pi' \in \Pi$.

  	\item \emph{Sequentially rational communication.}
  	The sender chooses an optimal message: $v(\beta(m)) \ge v(\beta(m'))$
 	for every $m \in \supp \mu(\cdot|\pi',s)$
	and every $m' \in M(s)$,
	for every $\pi' \in \Pi$ and $s \in [0,1]$.
	
	\item \emph{Consistent receiver beliefs.}
	Beliefs are consistent with the verifiability structure:
	$\beta(m) \in \conv M^{-1}(m)$ for every $m \in \mathcal{M}$.
	If $m \in \mathcal{M}$ is on-path, it is further required that
	\begin{equation*}
		\beta(m) \int_{[0,1]} \mu(m|\pi,s) \mathrm{d}\pi_S(s) = \int_{[0,1]} s \mu(m|\pi,s) \mathrm{d}\pi_S(s).
	\end{equation*}
\end{enumerate}

\subsection{Equilibrium existence}
\subsubsection{Preliminaries}

Consider a fixed receiver belief function $\beta$. Define
\begin{equation}
	\label{eq:w-defn}
	w_{\beta}(s) \equiv \max_{m \in M(s)}v(\beta(m)).
\end{equation}
For any given bounded function $f: [0,1] \to \mathbb{R}$ and prior $\overline{p} \in [0,1]$ let
\begin{equation}
	\label{eq:s-_definition}
	s_{-}(\overline{p},f) \equiv \sup \{s \in [0,\overline{p}] : \cav f(s) = f(s) \}
\end{equation}
and
\begin{equation}
	\label{eq:s+_definition}
	s_{+}(\overline{p},f) \equiv \inf \{s \in [\overline{p},1] : \cav f(s) = f(s) \},
\end{equation}
where $\cav f$ denotes the smallest concave function that majorizes $f$.
The following two lemmas are immediate from \citet{Kamenica2011}.

\begin{lemma}
	\label{lemma:sender-best-reply-existence}
	If $w_\beta$ is upper semi-continuous, a sender-optimal signal exists and attains value $(\cav w_{\beta})(\overline{p})$.
	The signal $\pi$ with $\supp \pi_S = \{s_{-}(\overline{p},w_{\beta}),s_{+}(\overline{p},w_{\beta})\}$ is sender-optimal.
\end{lemma}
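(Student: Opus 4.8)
The plan is to recognise that, once the receiver's belief map $\beta$ is held fixed, the sender's choice of signal is a textbook Bayesian--persuasion (concavification) problem with value-of-posteriors $w_\beta$, and then to exploit the one-dimensionality of the belief space $[0,1]$ together with upper semi-continuity of $w_\beta$ to read off an explicit optimal two-point signal. \emph{Reduction:} fix $\beta$ and a signal $\pi$. Since $M(s)$ is finite, sequentially rational messaging has type $s$ send a message attaining $\max_{m \in M(s)} v(\beta(m)) = w_\beta(s)$, so the sender's interim payoff at $s$ is exactly $w_\beta(s)$ and her ex-ante payoff from $\pi$ is $\int_{[0,1]} w_\beta \, \mathrm{d}\pi_S$. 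By the normalisation in Remark~\ref{remark:restrict-signals-unbiased}, the feasible marginals $\pi_S$ are \emph{exactly} the finitely-supported distributions on $[0,1]$ with mean $\overline{p}$ (Bayes-plausibility), so the sender solves $\sup_{\pi_S} \int w_\beta \, \mathrm{d}\pi_S$ over that set. Note $v$ is bounded ($v(0)\le v\le v(1)$, being non-decreasing on $[0,1]$), so $w_\beta$ is bounded and $\cav w_\beta$ is well-defined.

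For the \emph{upper bound}, take any feasible $\pi_S$: since $\cav w_\beta \ge w_\beta$ and $\cav w_\beta$ is concave, Jensen's inequality gives $\int w_\beta\,\mathrm{d}\pi_S \le \int \cav w_\beta\,\mathrm{d}\pi_S \le (\cav w_\beta)(\overline{p})$. It remains to exhibit a feasible signal meeting this bound, and this is where upper semi-continuity enters. First, the contact set $C \equiv \{s \in [0,1] : \cav w_\beta(s) = w_\beta(s)\}$ is closed, because $\cav w_\beta$ is concave (hence u.s.c.\ on the compact interval $[0,1]$) while $w_\beta$ is u.s.c., so $\cav w_\beta - w_\beta \ge 0$ is lower semi-continuous and its zero set is closed; moreover $0,1 \in C$, since the points $0$ and $1$ each admit only the trivial Bayes-plausible split. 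Hence $s_-(\overline{p},w_\beta) = \max\bigl(C\cap[0,\overline{p}]\bigr)$ and $s_+(\overline{p},w_\beta) = \min\bigl(C\cap[\overline{p},1]\bigr)$ are well-defined, belong to $C$, and satisfy $s_-\le \overline{p}\le s_+$. Second, $\cav w_\beta$ is affine on $[s_-,s_+]$: by construction there is no contact point in $(s_-,s_+)$, and at any putative kink of $\cav w_\beta$ inside this interval one could, using u.s.c.\ of $w_\beta$, replace $\cav w_\beta$ near the kink by a short chord lying strictly below it, while keeping the result concave and still above $w_\beta$, contradicting minimality of $\cav w_\beta$.

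To finish, take the two-point signal $\pi$ with $\supp \pi_S = \{s_-,s_+\}$ and the (unique) weights pinned down by $\mathbb{E}_{\pi_S}[\mathrm{id}] = \overline{p}$, the degenerate case $s_-=s_+=\overline{p}$ being simply the no-information signal. It is feasible, and its value is $\pi_S(s_-)\,w_\beta(s_-) + \pi_S(s_+)\,w_\beta(s_+) = \pi_S(s_-)\,\cav w_\beta(s_-) + \pi_S(s_+)\,\cav w_\beta(s_+) = (\cav w_\beta)(\overline{p})$, where the first equality uses $s_\pm\in C$ and the last uses affineness of $\cav w_\beta$ on $[s_-,s_+]$. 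This attains the upper bound, so $\pi$ is sender-optimal and the optimal value is $(\cav w_\beta)(\overline{p})$; this proves both assertions simultaneously, since the exhibited optimum is precisely a signal with support $\{s_-(\overline{p},w_\beta),s_+(\overline{p},w_\beta)\}$.

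The main obstacle is the pair of facts extracted from upper semi-continuity --- closedness of the contact set $C$ and affineness of $\cav w_\beta$ between consecutive contact points --- since these require a little care with the boundary behaviour of the concave envelope on $[0,1]$; they are, however, standard in the concavification literature and essentially immediate from \citet{Kamenica2011}. An alternative route to existence is abstract: the set of distributions on the compact set $[0,1]$ is weak-$*$ compact and $\pi_S \mapsto \int w_\beta\,\mathrm{d}\pi_S$ is upper semi-continuous there (as $w_\beta$ is u.s.c.\ and bounded above), so an optimum exists, after which a Carath\'{e}odory argument in $\mathbb{R}^2$ plus the ordering constraint reduces its support to the two points $s_-$ and $s_+$.
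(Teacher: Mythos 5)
Your proposal is correct in substance and takes the same route the paper intends: the paper gives no proof of this lemma, simply declaring it (together with Lemma~\ref{lemma:affine-w}) ``immediate from \citet{Kamenica2011}'', and what you have written out is precisely that standard concavification argument (reduction to $\sup_{\pi_S}\int w_\beta\,\mathrm{d}\pi_S$ over Bayes-plausible marginals, Jensen for the upper bound, attainment by the two-point split on $\{s_-,s_+\}$ via closedness of the contact set and affineness of $\cav w_\beta$ between consecutive contact points). One small repair in your justification of closedness of $C$: a finite concave function on $[0,1]$ is \emph{lower} semi-continuous (continuous on the interior and, at an endpoint, $f(0)\le\liminf_{s\downarrow 0}f(s)$ by concavity), not upper semi-continuous --- indeed $f(0)=0$, $f\equiv 1$ on $(0,1]$ is concave but not u.s.c.\ --- and ``u.s.c.\ minus u.s.c.'' does not yield l.s.c. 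The correct chain is: $\cav w_\beta$ is l.s.c.\ because it is concave, $-w_\beta$ is l.s.c.\ because $w_\beta$ is u.s.c., hence $\cav w_\beta - w_\beta$ is l.s.c.\ and its zero set (equivalently its sublevel set at $0$) is closed. With that one word changed, the argument goes through exactly as you state it.
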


\begin{lemma}
	\label{lemma:affine-w}
	$\cav w_{\beta}$ is affine over $[s_{-}(\overline{p},w_{\beta}),s_{+}(\overline{p},w_{\beta})]$.
\end{lemma}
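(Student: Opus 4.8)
The plan is to read the statement off the standard representation of the concave envelope together with the defining property of $s_-$ and $s_+$. Write $a \equiv s_-(\overline{p},w_\beta)$ and $b \equiv s_+(\overline{p},w_\beta)$. These are well defined: the sets in \eqref{eq:s-_definition} and \eqref{eq:s+_definition} are nonempty because $\cav w_\beta$ agrees with $w_\beta$ at $0$ and at $1$ (the only distribution with barycenter $0$, resp.\ $1$, is the corresponding Dirac mass), and one has $0 \le a \le \overline{p} \le b \le 1$. If $a = b$ the interval $[a,b]$ is a single point and there is nothing to prove, so assume $a < b$. Since $a$ is the supremum of a subset of $[0,\overline{p}]$ and $b$ the infimum of a subset of $[\overline{p},1]$, no point of $(a,\overline{p}]$ satisfies the condition in \eqref{eq:s-_definition} and no point of $[\overline{p},b)$ satisfies the one in \eqref{eq:s+_definition}; as $(a,b) \subseteq (a,\overline{p}] \cup [\overline{p},b)$, it follows that $\cav w_\beta(s) > w_\beta(s)$ for every $s \in (a,b)$.

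Next I would fix an arbitrary $s \in (a,b)$ and pass to a maximizing two-point splitting of it. Because $w_\beta$ is upper semi-continuous on the compact interval $[0,1]$, there exist $x \le s \le y$ and $\lambda \in [0,1]$ with $\lambda x + (1-\lambda) y = s$ and $\cav w_\beta(s) = \lambda w_\beta(x) + (1-\lambda) w_\beta(y)$ (existence of such a maximizer is compactness plus upper semi-continuity, and two points suffice by Carathéodory in the plane). Two observations then pin down $x$ and $y$. First, $\lambda \in (0,1)$: otherwise $\cav w_\beta(s) = w_\beta(s)$, contradicting the previous paragraph. Second, $\cav w_\beta(x) = w_\beta(x)$ and $\cav w_\beta(y) = w_\beta(y)$: if, say, $\cav w_\beta(x) > w_\beta(x)$, then splitting $x$ in turn yields a three-point average with barycenter $s$ whose value strictly exceeds $\cav w_\beta(s)$, which is impossible. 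Since $(a,b)$ contains no point at which $\cav w_\beta$ is tight, while $x \le s < b$ and $y > s > a$, this forces $x \le a$ and $y \ge b$.

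Finally, rewriting the splitting identity using tightness at $x$ and $y$ gives $\cav w_\beta(s) = \lambda \cav w_\beta(x) + (1-\lambda) \cav w_\beta(y)$, i.e.\ the concave function $\cav w_\beta$ meets the chord through $(x,\cav w_\beta(x))$ and $(y,\cav w_\beta(y))$ at the interior point $s$ of $[x,y]$. A concave function that touches one of its chords at an interior point coincides with that chord over the whole sub-interval (a two-line consequence of the definition of concavity), so $\cav w_\beta$ is affine on $[x,y] \supseteq [a,b]$, which is the claim. The main obstacle, such as it is, is the existence and structure of the maximizing two-point splitting; this is exactly the device underlying \citet{Kamenica2011}'s analysis, and one can in fact shortcut the whole argument by invoking Lemma~\ref{lemma:sender-best-reply-existence}: when $a<b$ one necessarily has $a < \overline{p} < b$, the signal with support $\{a,b\}$ attains $(\cav w_\beta)(\overline{p})$, and comparing its value to the $\cav w_\beta$-chord over $[a,b]$ (together with concavity) again shows $\cav w_\beta$ meets that chord at the interior point $\overline{p}$.
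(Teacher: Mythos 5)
Your proof is correct. The paper does not actually prove this lemma---it declares both Lemma~\ref{lemma:sender-best-reply-existence} and Lemma~\ref{lemma:affine-w} ``immediate from'' \citet{Kamenica2011}---and your argument (no contact points of $\cav w_\beta$ with $w_\beta$ strictly between $s_-$ and $s_+$, a two-point maximizing split of any interior $s$ whose endpoints must therefore lie outside $(s_-,s_+)$ and be contact points, hence $\cav w_\beta$ meets a chord at an interior point and is affine there) is exactly the standard concave-envelope argument that citation stands in for. The only hypothesis you lean on, upper semi-continuity of $w_\beta$ for attainment of the two-point split, is the same one Lemma~\ref{lemma:sender-best-reply-existence} imposes and the paper's technical assumptions in Section~\ref{subsec:tech-assn} deliver where the lemma is used, so nothing is missing.
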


\subsubsection{Proof of Lemma~\ref{lemma:existence}}
The existence proof is divided into two parts:
\Cref{lemma:existence-no-better-prior}
and \Cref{lemma:existence-better-prior},
dealing separately with the cases in which
the sender can or cannot prove news better than the prior (\Cref{def:pnbp}).
\Cref{lemma:existence-no-better-prior} shows more than mere existence
as it constructs an unpersuasive equilibrium
in which the sender acquires no information,
which is used to prove the second part of \Cref{thm:pnbp}.
\Cref{lemma:existence-no-better-prior} does not rely
on the continuity assumption on $M$,
namely upper semi-continuity of
$s \mapsto \max_{m \in M(s)} \min M^{-1}(m)$, for $s \in [0,1]$,
stated in \Cref{lemma:existence}.
\Cref{lemma:existence-better-prior} does rely on it.

\begin{lemma}[Existence without PNBP]
	\label{lemma:existence-no-better-prior}
	If the sender cannot prove news better than the prior,
	an unpersuasive equilibrium
	in which the sender acquires no information exists.
\end{lemma}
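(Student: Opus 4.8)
The plan is to exhibit an explicit equilibrium in which the sender plays the degenerate signal $\pi$ with $\pi_S$ the point mass at $\overline{p}$. Fix any $m^* \in M(\overline{p})$ (a non-empty finite set). Define the receiver's belief by $\beta(m^*) = \overline{p}$ and $\beta(m) = \underline{\beta}(m) = \min M^{-1}(m)$ for every $m \neq m^*$; define the sender's messaging strategy so that, at every $(\pi',s)$, she places probability one on some $m \in \argmax_{m \in M(s)} v(\beta(m))$, breaking ties in favour of $m^*$ when $s = \overline{p}$. This tie-break is legitimate: since the sender cannot prove news better than the prior (Definition~\ref{def:pnbp}), $v(\underline{\beta}(m)) = v(\min M^{-1}(m)) \le v(\overline{p})$ for \emph{every} $m \in \mathcal{M}$, so $v(\beta(m')) \le v(\overline{p}) = v(\beta(m^*))$ for all $m' \in M(\overline{p})$ and $m^*$ is indeed a maximizer at $\overline{p}$. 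Because signals have finite support, every integral in the equilibrium conditions is a finite sum, so no measurability issue arises and $\mu$ may be taken to depend on $s$ alone.

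Next I would verify the three equilibrium conditions. \emph{Consistent beliefs:} for $m \neq m^*$, $\min M^{-1}(m) \in M^{-1}(m) \subseteq \conv M^{-1}(m)$ since $M^{-1}(m)$ is closed and non-empty; the only on-path type is $\overline{p}$ and hence the only on-path message is $m^*$, for which Bayes' rule forces $\beta(m^*) = \overline{p} \in M^{-1}(m^*) \subseteq \conv M^{-1}(m^*)$ because $m^* \in M(\overline{p})$. \emph{Sequentially rational communication:} holds at every $(\pi',s)$ by construction of $\mu$. \emph{Optimal information acquisition:} this is the substantive step, and the hypothesis bites here. Combining $v(\underline{\beta}(m)) \le v(\overline{p})$ for all $m$ with $v(\beta(m^*)) = v(\overline{p})$ gives $\max_{m \in M(s)} v(\beta(m)) \le v(\overline{p})$ for every $s \in [0,1]$. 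Hence for any deviating signal $\pi'$ the sender's expected payoff is $\sum_{s} \pi'_S(s)\,\max_{m \in M(s)} v(\beta(m)) \le v(\overline{p})$, whereas the candidate strategy yields exactly $v(\beta(m^*)) = v(\overline{p})$, so $\pi$ is optimal. Since $\pi_S$ is a point mass at $\overline{p}$, the sender acquires no information, which also delivers the specific equilibrium needed in Proposition~\ref{prop:news-no-bet-prior}.

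I expect no serious obstacle: the construction is direct. The only point requiring care is that Bayes' rule on the equilibrium path overrides the skeptical belief on $m^*$ (pinning $\beta(m^*) = \overline{p}$ rather than $\underline{\beta}(m^*)$), and one must check that this comparatively favourable on-path belief creates no profitable deviation — it does not, precisely because $\overline{p}$ is ``no better than the prior'' and, by the hypothesis, no skeptically-interpreted message does better either, so the sender's continuation value is bounded by $v(\overline{p})$ at every realization. A secondary, purely bookkeeping point is to specify $\mu$ as a selection from the finite non-empty sets of $v(\beta(\cdot))$-maximizers, which together with the finiteness of signal supports makes all the equilibrium conditions well posed.
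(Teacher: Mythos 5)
Your construction is correct and is essentially the paper's own proof: the same degenerate signal at $\overline{p}$, the same belief system (maximally skeptical everywhere except one message in $M(\overline{p})$ assigned belief $\overline{p}$ by Bayes' rule on path), and the same use of the failure of PNBP to bound every continuation value by $v(\overline{p})$. The only cosmetic difference is that you select messages by maximizing $v(\beta(\cdot))$ with a tie-break toward $m^*$, whereas the paper sends $m_0$ whenever available and otherwise maximizes $\min M^{-1}(\cdot)$; both are sequentially rational selections.
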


\begin{proof}
	Fix some $m_0 \in M(\overline{p})$.
	The candidate equilibrium $(\pi,\mu,\beta)$ is defined as follows:
	\begin{enumerate}[label=(\roman*)]
		\item \label{list:no-info-eqm:signal}The signal $\pi$ is such that $\pi_S(\overline{p}) = 1$.
		
		\item $\mu(m|\pi',s) = 1$ for some (arbitrary) $m$ in $\argmax_{m \in M(s)} \min M^{-1}(m)$,
		for every $\pi' \in \Pi$ and $s \in [0,1]$ for which $m_0 \notin M(s)$;
		$\mu(m_0|\pi',s) = 1$ for every $\pi' \in \Pi$ and $s \in [0,1]$ for which $m_0 \in M(s)$.
		
		\item $\beta(m) = \min M^{-1}(m)$ if $m \neq m_0$ and $\beta(m_0) = \overline{p}$.
	\end{enumerate}
	Observe first that, by construction, the sender's messaging strategy is optimal whenever $m_0$ is not available
	(since $v$ is non-decreasing)
	and that the receiver's beliefs are consistent.
	It remains to show that for the sender it is optimal to acquire no information (\cref{list:no-info-eqm:signal} above) and send message $m_0$ at signal realizations where $m_0$ is available.
	Observe that,
	since the sender cannot
	prove news better than the prior
	and $v$ is non-decreasing,
	$v(\beta(m)) \le v(\beta(m_0))$ for all $m \in \mathcal{M}$.
	This in turn implies both that sending $m_0$ is optimal, when feasible, and that acquiring no information, as specified by \cref{list:no-info-eqm:signal},
	is an optimal information acquisition strategy,
	as message $m_0$ is then sent with probability one.
\end{proof}

\begin{lemma}[Existence with PNBP]
	\label{lemma:existence-better-prior}
	If the sender can prove news better than the prior
	and $s \mapsto \max_{m \in M(s)} \min M^{-1}(m)$,
	$s \in [0,1]$ is upper semi-continuous, an equilibrium exists.
\end{lemma}

\begin{proof}
	The candidate equilibrium $(\pi,\mu,\beta)$ is defined as follows.
	\begin{enumerate}[label=(\roman*)]
		\item The signal $\pi$ is such that $\supp \pi_S = \{s_-(\overline{p},w_\beta),s_+(\overline{p},w_\beta) \}$, where $w_\beta$, $s_-$ and $s_+$ are defined in equations \eqref{eq:w-defn}, \eqref{eq:s-_definition} and \eqref{eq:s+_definition}. 
		
		\item $\mu(m|\pi',s) = 1$ for some (arbitrary) $m$ in $\argmax_{m \in M(s)} \min M^{-1}(m)$ for every $\pi' \in \Pi$ and $s \in [0,1]$.

		\item $\beta(m) = \min M^{-1}(m)$ for every $m \in \mathcal{M}$.
	\end{enumerate}

	To streamline notation,
	write $s_-$ ($s_+$) to denote $s_-(\overline{p},w_\beta)$
	($s_+(\overline{p},w_\beta)$)
	as we will be holding $\overline{p}$,
	$\beta$ and $w_\beta$ fixed throughout the rest of the proof.
	Suppose first that $s_{-}$ ($s_{+}$)
	is lowest-consistent (\Cref{def:lc})
	with some message.
	Then, it follows from the definition of $\mu$,
	that $s_{-}$ ($s_{+}$) is lowest-consistent
	with every $m \in \supp \mu(m|\pi,s_{-})$
	($m \in \supp \mu(m|\pi,s_{+})$).

	By construction, the receiver's belief is therefore correct
	following on-path messages and consistent off-path.
	Given $\beta$, the sender's messaging strategy is optimal by construction,
	as it maximizes the receiver's posterior belief and $v$ is non-decreasing.
	Finally, notice that $w_\beta$ is upper semi-continuous.
	This follows from two observations.
	First,
	\begin{equation*}
		w_\beta(s)
		= \max_{m \in M(s)} v \big( \min M^{-1}(m) \big)
		= v \bigg( \max_{m \in M(s)} \min M^{-1}(m) \bigg),
	\end{equation*}
	where the first equality holds by construction
	(since $\beta = \min M^{-1}$)
	and the second equality follows from $v$ being non-decreasing.
	Second, $s \mapsto v \big( \max_{m \in M(s)} \min M^{-1}(m) \big)$
	is upper semi-continuous,
	since $v$ is upper semi-continuous and non-decreasing
	and $s \mapsto \max_{m \in M(s)} \min M^{-1}(m)$
	is upper semi-continuous by hypothesis.

	Therefore, since $w_\beta$ is upper semi-continuous,
	\Cref{lemma:sender-best-reply-existence} implies that $\pi$ is an optimal signal.
	It follows that, provided that each of $s_{-}$ and $s_{+}$
	is lowest-consistent with some message, the candidate is indeed an equilibrium.

	To complete the proof we must therefore show that $s_{-}$ and $s_{+}$ are indeed lowest-consistent with some message.
	The argument is divided into two separate cases.

	\begin{claim}
		If $s_{-} = s_{+} = \overline{p}$, then $s_{-}$ and $s_{+}$ are lowest-consistent with some message.
	\end{claim}
	\begin{proof}
		If $s_{-} = s_{+} = \overline{p}$ then it must be that $(\cav w_\beta) (\overline{p}) = w_\beta(\overline{p})$, by definition of $s_-$ and $s_+$.
		Suppose otherwise that $\overline{p}$ is not lowest-consistent with any message.
		Then there exists some $\tilde{s} < \overline{p}$ such that $w_\beta(\tilde{s}) \ge w_\beta(\overline{p})$,
		since any message available at $\overline{p}$ (including optimal ones) is also available at some lower type.
		Recall also that, since the sender can prove news better than the prior, there exists some message $m_h$ such that $w_\beta(\min M^{-1}(m_h)) > w_\beta(\overline{p})$ with $s_h \equiv \min M^{-1}(m_h) > \overline{p}$,
		since $v$ is non-decreasing.
		Summarizing, we have that $\tilde{s} < \overline{p} < s_h$ with $w_\beta(\tilde{s}) \ge w_\beta(\overline{p})$ and $w_\beta(s_h) > w_\beta(\overline{p})$.
		Since $\cav w_\beta$ is concave and majorizes $w_\beta$, it must be that $(\cav w_\beta) (\overline{p}) > w_\beta(\overline{p})$, contradicting that $(\cav w_\beta) (\overline{p}) = w_\beta(\overline{p})$.
	\end{proof}

	\begin{claim}
		If $s_{-} < \overline{p} < s_{+}$, then $s_{-}$ and $s_{+}$ are lowest-consistent with some message.
	\end{claim}
	\begin{proof}
		We make use of the following intermediate claim, proved after the main argument is complete.
		\begin{claim}
			\label{claim:w-increase}
			$w_\beta(s_{-}) < w_\beta(s_{+})$.
		\end{claim}
		Suppose first that $s_{-}$ is not lowest-consistent with any message.
		Then there exists a $\tilde{s} < s_{-}$ such that $w_\beta(\tilde{s}) \ge w_\beta(s_{-})$.
		Since $\cav w_\beta$ majorizes $w_\beta$ we have that $(\cav w_\beta)(\tilde{s}) \ge w_\beta(\tilde{s})$ and, by definition of $s_{-}$, $(\cav w_\beta)(s_{-}) = w_\beta(s_{-})$.
		It follows that $(\cav w_\beta)(\tilde{s}) \ge (\cav w_\beta)(s_{-})$.
		Observe next that \Cref{claim:w-increase} and the definition of $s_{-}$ and $s_{+}$ together imply that $(\cav w_\beta)(s_{+}) > (\cav w_\beta)(s_{-})$.
		Summarizing, we have that $\tilde{s} < s_{-} < s_{+}$ with $(\cav w_\beta)(\tilde{s}) \ge (\cav w_\beta)(s_{-})$ and $(\cav w_\beta)(s_{+}) > (\cav w_\beta)(s_{-})$, contradicting that $\cav w_\beta$ is concave.

		Suppose next that $s_{+}$ is not lowest-consistent with any message.
		Then there exists a $\tilde{s} < s_{+}$ such that $w_\beta(\tilde{s}) \ge w_\beta(s_{+})$.
		Observe next that the definition of $s_{-}$ and $s_{+}$ and \Cref{claim:w-increase} imply that $(\cav w_\beta)(s_{+}) = w_\beta(s_{+}) > w_\beta(s_{-}) = (\cav w_\beta)(s_{-})$.
		Since $w_\beta(\tilde{s}) \ge w_\beta(s_{+})$, we have that
		\begin{equation*}
			(\cav w_\beta)(\tilde{s}) \ge (\cav w_\beta)(s_{+}) > (\cav w_\beta)(s_{-}).
		\end{equation*}
		There are three possibilities.
		If $\tilde{s} < s_{-}$, then $\cav w_\beta$ is not concave, a contradiction.
		If $\tilde{s} = s_{-}$ then $(\cav w_\beta)(s_{-}) > (\cav w_\beta)(s_{-})$, also a contradiction.
		If $\tilde{s} \in (s_{-}, s_{+})$ then $\cav w_\beta$ is not affine over $[s_{-}, s_{+}]$, contradicting \Cref{lemma:affine-w}.
		It follows that both $s_{-}$ and $s_{+}$ are lowest-consistent with some message, completing the proof of the claim.
	\end{proof}

	\begin{proof}[Proof of \Cref{claim:w-increase}]
		Suppose otherwise that $w_\beta(s_{-}) \ge w_\beta(s_{+})$.
		Since the sender can prove news better than the prior,
		there exists a message $m_h$ such that $v(\min M^{-1}(m_h)) > v(\overline{p})$.
		Note that $s_h \equiv \min M^{-1}(m_h) > \overline{p}$,
		since $v$ is non-decreasing.
		It follows from the definition of $w_\beta$
		that $w_\beta(s_h) > v(\overline{p})$.

		Observe next that $w_\beta(s_{-}) \le v(\overline{p})$,
		since for any $m \in M(s_{-})$, $\beta(m) \le s_{-}$
		(recall that $\beta (m) = \min M^{-1}(m)$, by construction)
		and therefore $v(\beta(m)) \le v(s_{-}) \le v(\overline{p})$.
		This, in turn, implies that $w_\beta(s_h) > w_\beta(s_{-})$.
		Hence, since $w_\beta(s_{-}) \ge w_\beta(s_{+})$ by hypothesis,
		we have that $w_\beta(s_h) > w_\beta(s_{+})$ also holds.
		Summarizing, we have established that
		$w_\beta(s_h) > w_\beta(s_{-}) \ge w_\beta(s_{+})$.

		Recall that, by definition, $(\cav w_\beta)(s_{-}) = w_\beta(s_{-})$ and $(\cav w_\beta)(s_{+}) = w_\beta(s_{+})$.
		There are three possibilities to consider.
		If $s_h < s_{+}$, $\cav w_\beta$ is not affine over $(s_{-},s_{+})$, contradicting \Cref{lemma:affine-w}.
		If $s_h = s_{+}$ then $w_\beta(s_{+}) > w_\beta(s_{+})$, a contradiction.
		If $s_h > s_{+}$ then, since
		\begin{equation*}
			(\cav w_\beta)(s_h) > (\cav w_\beta)(s_{-}) \ge (\cav w_\beta)(s_{+}),
		\end{equation*}
		$\cav w_\beta$ is not concave, a contradiction.
		It follows that $w_\beta(s_{-}) < w_\beta(s_{+})$.
	\end{proof}

	This completes the proof of \Cref{lemma:existence-better-prior}.
\end{proof}

\section{Proof of Theorem~\ref{thm:unraveling}}
\label{app:proof-main-thm}
Let $(\pi,\mu,\beta)$ denote an equilibrium which is not unpersuasive
and suppose otherwise that there is an $s'' \in \supp \pi_S$
and a message $m'' \in \mu(\cdot|\pi,s'')$
such that, for some $s' < s''$, $m'' \in M(s')$.

Since $(\pi,\mu,\beta)$ is not unpersuasive,
some on-path message $m$ must lead to a payoff
$v(\beta(m)) \neq v(\overline{p})$ for the sender,
meaning that $\beta(m) \neq \overline{p}$.
It follows that the sender must be acquiring
some information in equilibrium,
so that $\supp \pi_S \ne \{\overline{p}\}$.
Let $\underline{s} \equiv \min \supp \pi_S$
and $\overline{s} \equiv \max \supp \pi_S$;
since $\supp \pi_S \ne \{\overline{p}\}$,
$\sum_{s \in \supp \pi_S} \pi_S(s) s = \overline{p}$ implies that
$\underline{s} < \overline{p} < \overline{s}$.
Define the sender's equilibrium interim value
from signal realization $s \in [0,1]$
as $w_\beta(s) \equiv \max_{m \in M(s)} v(\beta(m))$.

The argument relies on the following claim,
proved after the main argument is complete.

\begin{claim}
	\label{claim:affine-increasing-w}
	$w_\beta$ is affine and strictly increasing on $\supp \pi_S$.
\end{claim}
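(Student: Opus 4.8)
The plan is to use that, in equilibrium with a \emph{fixed} belief function $\beta$ (deviations being covert, $\beta$ does not respond), the sender's problem is a Bayesian persuasion problem with value function $w_\beta$. Indeed, sequential rationality forces $\mu(\cdot\mid\pi',s)$ to be supported on maximizers of $v(\beta(\cdot))$ over $M(s)$, so the payoff from any signal $\pi'$ is $\int w_\beta\,d\pi'_S$, and optimal information acquisition says that $\pi$ maximizes this over all signals; by the standard concavification argument (\citet{Kamenica2011}) the optimum is $(\cav w_\beta)(\overline p)$. Writing $\supp\pi_S=\{s_1<\dots<s_n\}$ with weights $p_i>0$ and $\sum_i p_i s_i=\overline p$, the relations $\sum_i p_i w_\beta(s_i)=(\cav w_\beta)(\overline p)\ge\sum_i p_i(\cav w_\beta)(s_i)\ge\sum_i p_i w_\beta(s_i)$ must all be equalities: the first equality gives $w_\beta=\cav w_\beta$ on $\supp\pi_S$, and tightness of Jensen's inequality for the concave function $\cav w_\beta$ (with all weights positive) forces $\cav w_\beta$ to be affine on $\conv(\supp\pi_S)=[s_1,s_n]$. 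Hence on $\supp\pi_S$ the function $w_\beta$ coincides with an affine $\ell(s)=\alpha+\gamma s$, which is the ``affine'' part.

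For strict monotonicity I would prove $\gamma>0$ by contradiction, supposing $\gamma\le0$. Since $n\ge2$ and the posterior mean equals $\overline p$, necessarily $s_1<\overline p<s_n$, so $w_\beta(s_1)=\alpha+\gamma s_1\ge\alpha+\gamma\overline p=(\cav w_\beta)(\overline p)=:P$, the equilibrium payoff. The key step is the bound $w_\beta(s_1)\le v(\overline p)$, argued in two cases. If $\gamma=0$, then $w_\beta\equiv c:=w_\beta(s_1)$ on $\supp\pi_S$, so by sequential rationality every on-path message $m$ satisfies $v(\beta(m))=c$; thus every on-path belief lies in the interval $v^{-1}(\{c\})$, and since there are only finitely many on-path messages and, by Bayes' rule, their beliefs average back to $\overline p$, we get $\overline p\in v^{-1}(\{c\})$, i.e.\ $v(\overline p)=c=w_\beta(s_1)$. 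If $\gamma<0$, let $m_1$ be an on-path message that $s_1$ sends; I claim $\beta(m_1)\le s_1$. If not, Bayes' rule forces some on-path type $s^\dagger\in\supp\pi_S$ with $s^\dagger>\beta(m_1)>s_1$ to also send $m_1$, and sequential rationality then gives $w_\beta(s^\dagger)=v(\beta(m_1))=w_\beta(s_1)$, contradicting that $w_\beta$ is strictly decreasing on $\supp\pi_S$. Hence $w_\beta(s_1)=v(\beta(m_1))\le v(s_1)\le v(\overline p)$. Either way $P\le w_\beta(s_1)\le v(\overline p)$.

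It remains to derive a contradiction, and this is where the hypothesis that the sender can prove news better than the prior enters: pick $m_h$ with $s_h:=\min M^{-1}(m_h)>\overline p$ and $v(s_h)>v(\overline p)$. Belief consistency gives $\beta(m_h)\ge s_h$, so $w_\beta(s_h)\ge v(\beta(m_h))\ge v(s_h)>v(\overline p)\ge P$. Deviating from $\pi$ to the binary signal with support $\{s_1,s_h\}$ and mean $\overline p$ (so that $s_1$ gets weight $\lambda=\tfrac{s_h-\overline p}{s_h-s_1}\in(0,1)$) yields expected payoff $\lambda w_\beta(s_1)+(1-\lambda)w_\beta(s_h)\ge\lambda P+(1-\lambda)w_\beta(s_h)>P$, contradicting optimality of $\pi$; hence $\gamma>0$. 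I expect the main obstacle to be exactly the intermediate bound $w_\beta(s_1)\le v(\overline p)$: the $\gamma=0$ and $\gamma<0$ cases genuinely require different arguments, and one must preclude the scenario in which the lowest on-path type pools into a message whose equilibrium belief exceeds the prior while $w_\beta$ nonetheless fails to be strictly increasing---ruled out above by combining Bayes' rule (iterated expectations) with the ability to prove news better than the prior.
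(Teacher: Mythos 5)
Your proof is correct and follows essentially the same route as the paper's: affinity via optimality of the equilibrium signal (your Jensen/concavification argument is the paper's mass-shifting argument in different clothing), and strict monotonicity by showing that a non-increasing $w_\beta$ would force the equilibrium payoff down to $v(\overline{p})$ via Bayes' rule, after which PNBP yields a profitable binary deviation through $s_h$. The only cosmetic difference is the strictly decreasing subcase: you bound $\beta(m_1)\le s_1$ directly, whereas the paper finds two on-path types with equal $w_\beta$ and uses affinity to reduce to the constant case; both hinge on the same martingale property of on-path beliefs.
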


Suppose first that $s'' \le \overline{p}$.
$w_\beta$ being affine (from \Cref{claim:affine-increasing-w})
together with $\pi$ being a feasible signal,
meaning that $\sum_{s \in \supp \pi_S} \pi_S(s) s = \overline{p}$,
imply that the sender's equilibrium expected payoff can be written as
\begin{equation}
	\label{eq:on-path-payoff}
	\frac{\overline{s} - \overline{p}}{\overline{s} - s''} w_\beta(s'') + \frac{\overline{p} - s''}{\overline{s} - s''} w_\beta(\overline{s}).
\end{equation}
Consider now a deviating signal supported on $\{s',\overline{s}\}$ only.
By choosing this signal,
the sender obtains an expected payoff equal to
\begin{equation}
	\label{eq:deviating-payoff}
	\frac{\overline{s} - \overline{p}}{\overline{s} - s'} w_\beta(s') + \frac{\overline{p} - s'}{\overline{s} - s'} w_\beta(\overline{s}).
\end{equation}
To show that \eqref{eq:deviating-payoff} is strictly larger than \eqref{eq:on-path-payoff} it suffices to note that:
$w_\beta(s') \ge w_\beta(s'')$,
since $m''$ is optimal at $s''$ and $m'' \in M(s')$, by hypothesis;
$s' < s''$, by hypothesis;
and $w_\beta(\overline{s}) > w_\beta(s'')$, since $w_\beta$ is strictly increasing on $\supp \pi_S$, by \Cref{claim:affine-increasing-w}.
This contradicts that $\pi$ is an equilibrium signal.

Suppose next that $s'' > \overline{p}$.
There are three cases to consider and,
in each, a deviation analogous
to the one in the previous paragraph can be constructed:	
(a) if $s' > \overline{p}$,
a signal supported on $\{\underline{s}, s'\}$ is a profitable deviation;
(b) if $s' = \overline{p}$,
a signal supported on $\overline{p}$ only is a profitable deviation;
(c) $s' < \overline{p}$,
a signal supported on $\{s',s''\}$ is a profitable deviation.
This establishes \cref{thm:unraveling:lc} in the theorem.

To prove \cref{thm:unraveling:skept},
consider any on-path message $m$.
\Cref{thm:unraveling:lc} implies that message $m$
is sent by type $\min M^{-1}(m)$ only:
$m \in \supp \mu(\cdot|\pi,\min M^{-1}(m))$ and
$m \notin \supp \mu(\cdot|\pi,s)$
if $s \neq \min M^{-1}(m)$.
As the receiver's beliefs are
to satisfy Bayes' rule on the equilibrium path,
it must be that $\beta(m) = \min M^{-1}(m)$.

For \cref{thm:unraveling:reveal},
consider any on-path sender type $s \in \supp \pi_S$
and a message $m \in \supp \mu(\cdot|\pi,s)$.
\Cref{thm:unraveling:lc} implies that $s = \min M^{-1}(m)$
and \cref{thm:unraveling:skept} implies that $\beta(m) = \min M^{-1}(m)$.
Therefore, $\beta(m) = s$.
\qed

\begin{proof}[Proof of \Cref{claim:affine-increasing-w}]
	We first prove the `affine' part.
	Recall that $\underline{s} \equiv \min \supp \pi_S$,
	$\overline{s} \equiv \max \supp \pi_S$
	and that, since $\supp \pi_S \ne \{\overline{p}\}$,
	$\sum_{s \in \supp \pi_S} \pi_S(s) s = \overline{p}$ implies that
	$\underline{s} < \overline{p} < \overline{s}$.
	Let constants $a$ and $b$ be defined by
	$w_\beta(\underline{s}) = a + b \underline{s}$
	and $w_\beta(\overline{s}) = a + b \overline{s}$.

	Suppose that there is some $s \in \supp \pi_S$
	with $w_\beta(s) < a + b s$.
	By appropriately shifting probability mass
	from $s$ to $\{\underline{s},\overline{s}\}$,
	we can construct a signal $\pi'$
	such that $\pi_S'(s) < \pi_S(s)$ which leads
	to a strictly higher expected payoff for the sender.

	Suppose next that there is some $s \in \supp \pi_S$
	with $w_\beta(s) > a + b s$.
	By appropriately shifting probability mass
	from $\{\underline{s},\overline{s}\}$ to $s$
	we can construct a signal $\pi'$
	with $\pi_S'(s) > \pi_S(s)$ which leads
	to a strictly higher expected payoff for the sender.

	We now turn to the `strictly increasing' part.
	Suppose first that $w_\beta$ is constant on $\supp \pi_S$.
	It must then equal $v(\overline{p})$,
	since if it were strictly larger (smaller)
	the sender would be inducing in the receiver
	a posterior belief strictly above (below) $\overline{p}$
	with probability one (since $v$ is non-decreasing).
	But if $w_\beta$ is constant and equal to $v(\overline{p})$
	on $\supp \pi_S$, the equilibrium must be unpersuasive,
	which does not hold by hypothesis.

	Suppose next that there exist on-path signal realizations $s, s' \in \supp \pi_S$
	such that $s' > s$ and $w_\beta(s') < w_\beta(s)$.
	Consider any $m' \in \supp \mu(\cdot|\pi, s')$
	and any $m \in \supp \mu(\cdot|\pi, s)$.
	Since $v$ is non-decreasing,
	it must be that $\beta(m') < \beta(m)$.
	Since the receiver uses Bayes' rule following on-path messages,
	it must also be that
	$m' \in \supp \mu(\cdot|\pi, \tilde{s}')$
	for some $\tilde{s}' < s'$ in $\supp \pi_S$,
	or that $m \in \supp \mu(\cdot|\pi, \tilde{s})$
	for some $\tilde{s} > s$ in $\supp \pi_S$,
	or both.
	In all cases, there exists a pair of on-path sender types
	with the same value of $w_\beta$.
	Since we established that $w_\beta$
	is affine on $\supp \pi_S$,
	it must then be constant on $\supp \pi_S$,
	contradicting that
	$w_\beta(s') < w_\beta(s)$.
\end{proof}

\section{Proof of Theorem~\ref{thm:pnbp}}
\label{sec:proof_of_theorem_pnbp}

Start with the first part of the statement:
we want to show that if PNBP holds then all equilibria are unraveling.
Exploiting \Cref{thm:unraveling},
it is enough to show that if PNBP holds and an equilibrium is unpersuasive,
then it is also an unraveling equilibrium.
Let $(\pi,\mu,\beta)$ denote such an unpersuasive equilibrium.
Observe that, since PNBP holds,
there exists some $m_h$ such that $v(\min M^{-1}(m_h)) > v(\overline{p})$ and,
since $v$ is non-decreasing, we also have that $s_h \equiv \min M^{-1}(m_h) > \overline{p}$.

Suppose first that the sender acquires information,
so that $\supp \pi_S \ne \{\overline{p}\}$.
Let $\underline{s} \equiv \min \supp \pi_S$.
Since the sender acquires information, $\underline{s} < \overline{p}$.
Since the equilibrium is unpersuasive,
we have  that $v(\beta(m)) = v(\overline{p})$
for every $m \in  \supp \mu(\cdot|\pi,\underline{s})$.
Then the sender can profitably deviate by choosing signal $\pi'$
with $\supp \pi_S' = \{\underline{s},s_h\}$.
So, when PNBP holds,
there cannot be unpersuasive equilibria in which the sender acquires information.

Consider now the case in which
the sender acquires no information in equilibrium,
so that $\pi_S (\overline{p}) = 1$.
Suppose that there exists some $m'' \in M(\overline{p})$
such that $\mu(m''|\pi,\overline{p}) > 0$ for which we can find
some $s' < \overline{p}$ such that $m'' \in M(s')$.
That is: $\overline{p}$ is not lowest-consistent with message $m''$.

We will show that the sender can profitably deviate by choosing signal $\pi'$
with $\supp \pi_S' = \{s',s_h\}$.
Following realization $s'$, the sender obtains a payoff of at least $v(\overline{p})$ (since $m'' \in M(s')$, by construction) and following realization $s_h$ the sender obtains a payoff strictly above $v(\overline{p})$ (since $m_h \in M(s_h)$, $\beta(m_h) \ge s_h$ and therefore $v(\beta(m_h)) \ge v(s_h) > v(\overline{p})$).
Hence $\pi'$ is a strictly profitable deviation,
so if $(\pi,\mu,\beta)$ is an equilibrium in which the sender acquires no information,
it must be that $\overline{p}$ is lowest-consistent with every $m \in \supp \mu(\cdot|\pi,\overline{p})$.
It follows that $\beta(m) = \overline{p} = \min M^{-1}(m) $ for every $m \in \supp \mu(\cdot|\pi,\overline{p})$,
so that the receiver is maximally skeptical following every on-path message
and that the sender reveals all acquired information.
The equilibrium is therefore unraveling.

Now to the second part of the statement.
Such an unpersuasive equilibrium is constructed in \Cref{lemma:existence-no-better-prior},
Appendix~\ref{app:eq-def-exist}.
Note that \Cref{lemma:existence-no-better-prior} does not rely
on the continuity requirement stated in \Cref{lemma:existence},
which is not assumed for \Cref{thm:pnbp}.

To show that it is sender-preferred,
we will show that there is no equilibrium in which
the sender obtains an expected payoff strictly larger than $v(\overline{p})$.
Suppose that $(\pi,\mu,\beta)$ is such an equilibrium.
It cannot be unpersuasive by definition,
so \Cref{thm:unraveling} implies that it must be unraveling.
Therefore, \Cref{cor:value} implies that the sender's expected payoff
is equal to $(\cav \underline{v}_M) (\overline{p})$,
where $\underline{v}_M$ is defined in \eqref{eq:max-skept-value-posteriors}.
However, if $(\cav \underline{v}_M) (\overline{p}) > v(\overline{p})$
there must exist some $s \in [0,1]$ such that
$\underline{v}_M (s) > v(\overline{p})$,
implying in turn that there exists
an $m_h \in \mathcal{M}$ such that
$v(\min M^{-1}(m_h)) > v(\overline{p})$,
i.e., that the sender can prove news better than the prior,
a contradiction.
\qed

\section{Proof of Theorem~\ref{thm:cs}}
\label{sec:proof-thm-cs}

We prove first \Cref{lemma:CS}
and then turn to the proof of \Cref{thm:cs}.
Recall that in \Cref{sec:cs} we are maintaining
the condition stated in \Cref{lemma:existence},
which ensures equilibrium existence. 

\subsection{Proof of Lemma~\ref{lemma:CS}}
Start with the $\Rightarrow$ direction.
We use a replication argument
which exploits \Cref{thm:unraveling}.
Consider first the triple $(v,\overline{p},M')$
and any unraveling equilibrium signal $\pi'$,
which exists by hypothesis
and leads to payoff $\underline{V}_{M'}^*(v,\overline{p})$.
This is because, if PNBP holds, every equilibrium is unraveling
(by \Cref{thm:pnbp})
and the sender's expected payoff is unique
(by \Cref{cor:value}).
If PNBP does not hold,
the unraveling equilibrium expected payoff
does not exceed
the unpersuasive equilibrium payoff, from \Cref{thm:pnbp}.

We will show that,
in any equilibrium with parameters $(v,\overline{p},M'')$,
the sender's expected payoff from $\pi'$
is at least as large as $\underline{V}_{M'}^*(v,\overline{p})$.
This, in turn, means that the expected payoff
in any equilibrium with parameters $(v,\overline{p},M'')$
must also be at least as large as $\underline{V}_{M'}^*(v,\overline{p})$,
completing the argument.

Observe that in any unraveling equilibrium at parameters $(v,\overline{p},M')$,
\Cref{thm:unraveling} implies that for any equilibrium signal $\pi'$
at any $s \in \supp \pi_S'$ the sender obtains a continuation payoff of $v(s)$.
Since $s$ is lowest-consistent with some message under $M'$
and $L_{M''} \supseteq L_{M'}$,
$s$ is also lowest-consistent with some message under $M''$.
This in turn implies that,
for any equilibrium (unraveling or not) $(\pi'',\mu'',\beta'')$
when parameters are $(v,\overline{p},M'')$,
$\max_{m \in M''(s)} v(\beta''(m)) \ge v(s)$ for every $s \in \supp \pi_S'$.
Hence the sender's expected payoff from $\pi'$
with parameters $(v,\overline{p},M'')$
is at least as large as $\underline{V}_{M'}^*(v,\overline{p})$,
as required.

Now to the $\Leftarrow$ direction.
We will prove the contrapositive, i.e.,
that $M'' \not \succeq^\mathrm{lc} M'$ implies
$\underline{V}_{M''}^*(v,\overline{p}) < \underline{V}_{M'}^*(v,\overline{p})$
for some admissible $v$ and $\overline{p}$
such that at $(v,\overline{p},M')$ an unraveling equilibrium exists.
Observe first that $M'' \not \succeq^\mathrm{lc} M'$ implies that $L_{M'} \ne \{0\}$,
as $0 \in L_M$ for any $M$.
Since $L_{M'} \ne \{0\}$, there exists some $s^* \in L_{M'}$, $s^* > 0$ such that $s^* \notin L_{M''}$.

Take $v(p) = 1_{\{p \ge s^*\}}$ and $\overline{p} = s^*/2$.
Let $\underline{v}_{M'}$ and $\underline{v}_{M''}$ denote the value of posteriors with maximal receiver skepticism
as defined in equation~\eqref{eq:max-skept-value-posteriors} when the verifiability structures are $M'$ and $M''$, respectively.
With parameters $(v,\overline{p},M')$ PNBP holds,
so all equilibria are unraveling (\Cref{thm:pnbp})
and lead the sender to obtain an expected payoff
equal to her full-commitment payoff of $(\cav v)(\overline{p}) = 1/2$
(this follows from observing that
$(\cav v)(\overline{p}) = (\cav \underline{v}_{M'}) (\overline{p})$
and applying \Cref{cor:value}).

This expected payoff is not attainable in any equilibrium under $M''$.
If the sender cannot prove news better than the prior under $M''$,
this follows from \Cref{thm:pnbp},
as the largest expected payoff
for the sender in equilibrium is $v(\overline{p}) = 0 < 1/2$
in this case.
If the sender can prove news better than the prior under $M''$,
this follows because all equilibria must be unraveling (from \Cref{thm:pnbp}),
so $s^* \notin L_{M''}$ implies that $s^*$ cannot be on-path.
However, in an unraveling equilibrium,
$s^*$ must be on-path if the sender is to obtain
an expected payoff of $(\cav v)(\overline{p})$. \qed

\subsection{Proof of Theorem~\ref{thm:cs}}
For the $\Rightarrow$ direction,
observe that if at $(v,\overline{p},M')$
the sender can prove news better than the prior,
then an unraveling equilibrium exists (\Cref{thm:pnbp})
and the equilibrium expected payoff is unique (\Cref{cor:value}),
i.e., $\underline{V}_{M'}^*(v,\overline{p}) = \overline{V}_{M'}^*(v,\overline{p}) = (\cav \underline{v}_{M'}) (\overline{p})$.
\Cref{lemma:CS} therefore implies that
$\underline{V}_{M''}^*(v,\overline{p}) \ge \overline{V}_{M'}^*(v,\overline{p})$.
As $M'' \succeq^\mathrm{lc} M'$,
the sender can also prove news better than the prior at $(v,\overline{p},M'')$,
so that $\underline{V}_{M''}^*(v,\overline{p}) = \overline{V}_{M''}^*(v,\overline{p}) = (\cav \underline{v}_{M''}) (\overline{p})$
(from \Cref{thm:pnbp} and \Cref{cor:value}),
completing this part of the argument.

Now to the $\Leftarrow$ direction.
We will prove the contrapositive, i.e.,
that $M'' \not \succeq^\mathrm{lc} M'$ implies
$\overline{V}_{M''}^*(v,\overline{p}) < \overline{V}_{M'}^*(v,\overline{p})$
for some admissible $v$ and $\overline{p}$
such that at $(v,\overline{p},M')$ PNBP holds.
Observe that the construction used in the proof of \Cref{lemma:CS}
in fact considers a triple $(v,\overline{p},M')$
such that the sender can prove news better than the prior
and shows that the unique equilibrium expected payoff at $(v,\overline{p},M')$
is not attainable in any equilibrium under $M''$. \qed

\end{appendices}

\bibliography{bibl.bib}

\begin{thebibliography}{44}
\providecommand{\enquote}[1]{{}``#1''{}}
\providecommand{\ensquote}[1]{{}`#1'{}}
\providecommand{\natexlab}[1]{#1}
\providecommand{\url}[1]{\texttt{#1}}
\providecommand{\urlprefix}{}
\providecommand{\doi}[1]{\url{https://doi.org/#1}}
\makeatletter\newif\if@suppresscomma\@suppresscommafalse
\newcommand\suppresscomma{\global\@suppresscommatrue}
\newcommand\checkcomma{\if@suppresscomma\else,\fi\global\@suppresscommafalse}\makeatother

\bibitem[{Argenziano et~al.(2016)Argenziano, Severinov, and
  Squintani}]{Argenziano2016}
Argenziano, Rossella, Sergei Severinov, and Francesco Squintani. 2016.
\newblock \enquote{Strategic information acquisition and transmission.}
\newblock \emph{American Economic Journal: Microeconomics} 8~(3): 119--155.

\bibitem[{Austen-Smith(1994)}]{Austen-Smith1994}
Austen-Smith, David. 1994.
\newblock \enquote{Strategic transmission of costly information.}
\newblock \emph{Econometrica} 62~(4): 955--963.

\bibitem[{Ben-Porath et~al.(2018)Ben-Porath, Dekel, and
  Lipman}]{Ben-Porath2018}
Ben-Porath, Elchanan, Eddie Dekel, and Barton~L. Lipman. 2018.
\newblock \enquote{Disclosure and choice.}
\newblock \emph{Review of Economic Studies} 85~(3): 1471--1501.

\bibitem[{Crawford and Sobel(1982)}]{Crawford1982}
Crawford, Vincent~P. and Joel Sobel. 1982.
\newblock \enquote{Strategic information transmission.}
\newblock \emph{Econometrica} 50~(6): 1431--1451.

\bibitem[{Deimen and Szalay(2019)}]{Deimen2019}
Deimen, Inga and Dezs{\H{o}} Szalay. 2019.
\newblock \enquote{Delegated expertise, authority, and communication.}
\newblock \emph{American Economic Review} 109~(4): 1349--1374.

\bibitem[{DellaVigna and Kaplan(2007)}]{DellaVigna2007}
DellaVigna, Stefano and Ethan Kaplan. 2007.
\newblock \enquote{The Fox News effect: Media bias and voting.}
\newblock \emph{Quarterly Journal of Economics} 122~(3): 1187--1234.

\bibitem[{DeMarzo et~al.(2019)DeMarzo, Kremer, and Skrzypacz}]{DeMarzo2019}
DeMarzo, Peter~M., Ilan Kremer, and Andrzej Skrzypacz. 2019.
\newblock \enquote{Test design and minimum standards.}
\newblock \emph{American Economic Review} 109~(6): 2173--2207.

\bibitem[{Dranove and Jin(2010)}]{Dranove2010}
Dranove, David and Ginger~Zhe Jin. 2010.
\newblock \enquote{Quality disclosure and certification: Theory and practice.}
\newblock \emph{Journal of Economic Literature} 48~(4): 935--963.

\bibitem[{Dye(1985)}]{Dye1985}
Dye, Ronald~A. 1985.
\newblock \enquote{Disclosure of nonproprietary information.}
\newblock \emph{Journal of Accounting Research} 23~(1): 123--145.

\bibitem[{Felgenhauer and Loerke(2017)}]{Felgenhauer2017}
Felgenhauer, Mike and Petra Loerke. 2017.
\newblock \enquote{Bayesian persuasion with private experimentation.}
\newblock \emph{International Economic Review} 58~(3): 829--856.

\bibitem[{Fr{\'e}chette et~al.(2022)Fr{\'e}chette, Lizzeri, and
  Perego}]{Frechette2022}
Fr{\'e}chette, Guillaume~R., Alessandro Lizzeri, and Jacopo Perego. 2022.
\newblock \enquote{Rules and commitment in communication: An experimental
  analysis.}
\newblock \emph{Econometrica} 90~(5): 2283--2318.

\bibitem[{Gentzkow and Kamenica(2017)}]{Gentzkow2017}
Gentzkow, Matthew and Emir Kamenica. 2017.
\newblock \enquote{Disclosure of endogenous information.}
\newblock \emph{Economic Theory Bulletin} 5~(1): 47--56.

\bibitem[{Gentzkow et~al.(2015)Gentzkow, Shapiro, and Stone}]{Gentzkow2015}
Gentzkow, Matthew, Jesse~M. Shapiro, and Daniel~F. Stone. 2015.
\newblock \enquote{Media bias in the marketplace: Theory.}
\newblock In \emph{Handbook of Media Economics}\checkcomma{} edited by Simon~P.
  Anderson, Joel Waldfogel, and David Str\"{o}mberg\checkcomma{}
  Vol.~1\checkcomma{} 623--645. North-Holland.

\bibitem[{Gibbons et~al.(2013)Gibbons, Matouschek, and
  Roberts}]{GibbonsMatouschekRoberts2013}
Gibbons, Robert, Niko Matouschek, and John Roberts. 2013.
\newblock \enquote{Decisions in organizations.}
\newblock In \emph{Handbook of Organizational Economics}\checkcomma{} edited by
  Robert Gibbons and John Roberts\checkcomma{} 373--431. Princeton University
  Press.

\bibitem[{Green and Laffont(1986)}]{Green1986}
Green, Jerry~R. and Jean-Jacques Laffont. 1986.
\newblock \enquote{Partially verifiable information and mechanism design.}
\newblock \emph{Review of Economic Studies} 53~(3): 447--456.

\bibitem[{Groseclose and Milyo(2005)}]{Groseclose2005}
Groseclose, Tim and Jeffrey Milyo. 2005.
\newblock \enquote{A measure of media bias.}
\newblock \emph{Quarterly Journal of Economics} 120~(4): 1191--1237.

\bibitem[{Grossman(1981)}]{Grossman1981}
Grossman, Sanford~J. 1981.
\newblock \enquote{The informational role of warranties and private disclosure
  about product quality.}
\newblock \emph{The Journal of Law and Economics} 24~(3): 461--483.

\bibitem[{Grossman and Hart(1980)}]{Grossman1980}
Grossman, Sanford~J. and Oliver~D. Hart. 1980.
\newblock \enquote{Disclosure laws and takeover bids.}
\newblock \emph{The Journal of Finance} 35~(2): 323--334.

\bibitem[{Guo and Shmaya(2021)}]{Guo2021}
Guo, Yingni and Eran Shmaya. 2021.
\newblock \enquote{Costly miscalibration.}
\newblock \emph{Theoretical Economics} 16~(2): 477--506.

\bibitem[{Hagenbach et~al.(2014)Hagenbach, Koessler, and
  Perez-Richet}]{Hagenbach2014}
Hagenbach, Jeanne, Fr{\'e}d{\'e}ric Koessler, and Eduardo Perez-Richet. 2014.
\newblock \enquote{Certifiable pre-play communication: Full disclosure.}
\newblock \emph{Econometrica} 82~(3): 1093--1131.

\bibitem[{Herresthal(2022)}]{Herresthal2022}
Herresthal, Claudia. 2022.
\newblock \enquote{Hidden testing and selective disclosure of evidence.}
\newblock \emph{Journal of Economic Theory} 200: 105402.

\bibitem[{Janssen(2018)}]{Janssen2018}
Janssen, Mathijs. 2018.
\newblock \enquote{The whole truth? Generating and suppressing hard evidence.}
\newblock Working Paper.

\bibitem[{Jung and Kwon(1988)}]{Jung1988}
Jung, Woon-Oh and Young~K. Kwon. 1988.
\newblock \enquote{Disclosure when the market is unsure of information
  endowment of managers.}
\newblock \emph{Journal of Accounting Research} 26~(1): 146--153.

\bibitem[{Kamenica and Gentzkow(2011)}]{Kamenica2011}
Kamenica, Emir and Matthew Gentzkow. 2011.
\newblock \enquote{Bayesian persuasion.}
\newblock \emph{American Economic Review} 101~(6): 2590--2615.

\bibitem[{Kartik et~al.(2017)Kartik, Lee, and Suen}]{Kartik2017}
Kartik, Navin, Frances~Xu Lee, and Wing Suen. 2017.
\newblock \enquote{Investment in concealable information by biased experts.}
\newblock \emph{The RAND Journal of Economics} 48~(1): 24--43.

\bibitem[{Larcinese et~al.(2011)Larcinese, Puglisi, and Snyder}]{Larcinese2011}
Larcinese, Valentino, Riccardo Puglisi, and James~M. Snyder, Jr. 2011.
\newblock \enquote{Partisan bias in economic news: Evidence on the
  agenda-setting behavior of US newspapers.}
\newblock \emph{Journal of Public Economics} 95~(9-10): 1178--1189.

\bibitem[{Leuz and Wysocki(2016)}]{Leuz2016}
Leuz, Christian and Peter~D. Wysocki. 2016.
\newblock \enquote{The economics of disclosure and financial reporting
  regulation: Evidence and suggestions for future research.}
\newblock \emph{Journal of Accounting Research} 54~(2): 525--622.

\bibitem[{Lin and McNichols(1998)}]{Lin1998}
Lin, Hsiou-wei and Maureen~F. McNichols. 1998.
\newblock \enquote{Underwriting relationships, analysts' earnings forecasts and
  investment recommendations.}
\newblock \emph{Journal of Accounting and Economics} 25~(1): 101--127.

\bibitem[{Lipman and Seppi(1995)}]{Lipman1995}
Lipman, Barton~L. and Duane~J. Seppi. 1995.
\newblock \enquote{Robust inference in communication games with partial
  provability.}
\newblock \emph{Journal of Economic Theory} 66~(2): 370--405.

\bibitem[{Lipnowski et~al.(2022)Lipnowski, Ravid, and Shishkin}]{Lipnowski2022}
Lipnowski, Elliot, Doron Ravid, and Denis Shishkin. 2022.
\newblock \enquote{Persuasion via weak institutions.}
\newblock \emph{Journal of Political Economy} 130~(10): 2705--2730.

\bibitem[{Mathis(2008)}]{Mathis2008}
Mathis, J{\'e}r{\^o}me. 2008.
\newblock \enquote{Full revelation of information in Sender--Receiver games of
  persuasion.}
\newblock \emph{Journal of Economic Theory} 143~(1): 571--584.

\bibitem[{Mathis et~al.(2009)Mathis, McAndrews, and Rochet}]{Mathis2009}
Mathis, J{\'e}r{\^o}me, James McAndrews, and Jean-Charles Rochet. 2009.
\newblock \enquote{Rating the raters: Are reputation concerns powerful enough
  to discipline rating agencies?}
\newblock \emph{Journal of Monetary Economics} 56~(5): 657--674.

\bibitem[{Matthews and Postlewaite(1985)}]{Matthews1985}
Matthews, Steven and Andrew Postlewaite. 1985.
\newblock \enquote{Quality testing and disclosure.}
\newblock \emph{The RAND Journal of Economics} 16~(3): 328--340.

\bibitem[{Michaely and Womack(1999)}]{Michaely1999}
Michaely, Roni and Kent~L. Womack. 1999.
\newblock \enquote{Conflict of interest and the credibility of underwriter
  analyst recommendations.}
\newblock \emph{Review of Financial Studies} 12~(4): 653--686.

\bibitem[{Milgrom(1981)}]{Milgrom1981}
Milgrom, Paul. 1981.
\newblock \enquote{Good news and bad news: Representation theorems and
  applications.}
\newblock \emph{The Bell Journal of Economics} 12~(2): 380--391.

\bibitem[{Min(2021)}]{Min2021}
Min, Daehong. 2021.
\newblock \enquote{Bayesian persuasion under partial commitment.}
\newblock \emph{Economic Theory} 72: 743--764.

\bibitem[{Nguyen and Tan(2021)}]{Nguyen2021}
Nguyen, Anh and Yong~Teck Tan. 2021.
\newblock \enquote{Bayesian persuasion with costly messages.}
\newblock \emph{Journal of Economic Theory} 193: 105212.

\bibitem[{Okuno-Fujiwara et~al.(1990)Okuno-Fujiwara, Postlewaite, and
  Suzumura}]{Okuno-Fujiwara1990}
Okuno-Fujiwara, Masahiro, Andrew Postlewaite, and Kotaro Suzumura. 1990.
\newblock \enquote{Strategic information revelation.}
\newblock \emph{Review of Economic Studies} 57~(1): 25--47.

\bibitem[{Pei(2015)}]{Pei2015}
Pei, Harry. 2015.
\newblock \enquote{Communication with endogenous information acquisition.}
\newblock \emph{Journal of Economic Theory} 160: 132--149.

\bibitem[{Puglisi and Snyder(2011)}]{Puglisi2011}
Puglisi, Riccardo and James~M. Snyder, Jr. 2011.
\newblock \enquote{Newspaper coverage of political scandals.}
\newblock \emph{Journal of Politics} 73~(3): 931--950.

\bibitem[{Rappoport(2024)}]{Rappoport2024}
Rappoport, Daniel. 2024.
\newblock \enquote{Evidence and skepticism in verifiable disclosure games.}
\newblock Working Paper.

\bibitem[{Seidmann and Winter(1997)}]{Seidmann1997}
Seidmann, Daniel~J. and Eyal Winter. 1997.
\newblock \enquote{Strategic information transmission with verifiable
  messages.}
\newblock \emph{Econometrica} 65~(1): 163--169.

\bibitem[{Shishkin(2024)}]{Shishkin2024}
Shishkin, Denis. 2024.
\newblock \enquote{Evidence acquisition and voluntary disclosure.}
\newblock Working Paper.

\bibitem[{Viscusi(1978)}]{Viscusi1978}
Viscusi, Kip~W. 1978.
\newblock \enquote{A note on ``lemons'' markets with quality certification.}
\newblock \emph{The Bell Journal of Economics} 9~(1): 277--279.

\end{thebibliography}
\end{document}